\newcommand{\rebuttal}[1]{{\textcolor{black} {#1}}}
\newcommand{\gudhi}{  \texttt{Gudhi} }
\newcommand{\mma}{  \texttt{MMA} }
\newcommand{\rivet}{  \texttt{Rivet} }
\def\eqref#1{equation~\ref{#1}}
\def\1{\bm{1}}
\DeclareMathAlphabet{\mathsfit}{\encodingdefault}{\sfdefault}{m}{sl}
\SetMathAlphabet{\mathsfit}{bold}{\encodingdefault}{\sfdefault}{bx}{n}
\newcommand{\norm}[1]{\left\lVert #1 \right\rVert}
\newcommand{\N}{\mathbb{N}}
\newcommand{\B}{\mathcal{B}}
\newcommand{\e}{{\bf e}}
\newcommand{\uv}{{\bf u}}
\newcommand{\M}{{\mathbb M}}
\newcommand{\di}{d_{\rm I}}
\newcommand{\db}{d_{\rm B}}
\newcommand{\R}{\mathbb{R}}
\newcommand{\bdelta}{\boldsymbol \delta}
\newcommand{\pop}{\mathrm{op}}
\newcommand{\repr}{V}
\newcommand{\supp}[1]{ {\rm supp}\left(#1\right) }
\newcommand{\dd}{\, \mathrm d}
\newcommand{\interior}[1]{%
  {\kern0pt#1}^{\mathrm{o}}%
}
\newcommand\restr[2]{{   
  \left.\kern-\nulldelimiterspace 
  #1 
  \vphantom{\big|}
  \right|_{#2}
  }}
\newcommand{\reprname}[1]{T-CDR}
\newcommand{\stabreprname}[1]{S-CDR}
\newtheorem{theorem}{Theorem}
\newtheorem{lemma}{Lemma}
\newtheorem{proposition}{Proposition}
\theoremstyle{definition}
\newtheorem{definition}{Definition}
\newtheorem{remark}{Remark}
\title{A Framework for Fast and Stable Representations of Multiparameter Persistent Homology Decompositions}
\author{%
  David Loiseaux$^{1}$,
    \textbf{Mathieu Carrière}$^{1}$,
    \textbf{Andrew J. Blumberg}$^{2}$
    \vspace{0.15cm}\\
  $^1$DataShape, Centre Inria d'Université Côte d'Azur
    \quad $^2$Mathematics, Columbia University\\
}
\begin{document}

\maketitle

\begin{abstract}
Topological data analysis (TDA) is an area of data
science that focuses on using invariants from algebraic topology to
provide multiscale shape descriptors for geometric data sets such as point clouds.  
One of the
most important such descriptors is {\em persistent homology}, which encodes the change in shape as a filtration parameter changes;
a typical parameter is the feature scale.
For many data sets, it is useful to simultaneously vary multiple
filtration parameters, for example feature scale and density.  While
the theoretical properties of single parameter persistent homology are
well understood, less is known about the multiparameter case.  In particular, a central question is the problem of representing multiparameter
persistent homology by elements of a vector space for integration with
standard machine learning algorithms.
Existing approaches to this problem either ignore most of the
multiparameter information to reduce to the one-parameter case or are
heuristic and potentially unstable in the face of noise.  In this
article, we introduce a new general representation framework 
that leverages recent results on {\em decompositions} of multiparameter
persistent homology. 
This framework is rich in information, fast to compute, and encompasses previous approaches.
Moreover,
we establish theoretical stability guarantees under this framework 
as well as efficient algorithms for practical computation,
making this framework an applicable and versatile tool for analyzing geometric and point cloud data.
We validate our stability results and algorithms with numerical experiments that demonstrate statistical convergence, prediction accuracy, and fast running times 
on several real data sets. 
%
\end{abstract}

\section{Introduction}

Topological Data Analysis (TDA)~\citep{Carlsson2009} is a methodology for analyzing data sets using multiscale shape descriptors coming from algebraic topology.  There has been intense interest in the field in the last decade, since topological features promise to allow  practitioners to compute and encode information that classical approaches do not capture. Moreover, TDA rests on solid theoretical grounds, with guarantees accompanying many of its methods and descriptors.  TDA has proved useful in a wide variety of application areas, including computer graphics~\citep{Carriere2015, Poulenard2018}, computational biology~\citep{Rabadan2019}, and material science~\citep{Buchet2018, Saadatfar2017}, among many others. 

The main tool of TDA is {\em persistent homology}. In its most standard form, one is given a finite metric space $X$ (e.g., a finite set of points and their pairwise distances) and a continuous function $f:X\rightarrow \R$. This function usually represents a parameter of interest (such as, e.g., scale or density for point clouds, marker genes for single-cell data, etc), and the goal of persistent homology is to characterize the topological variations of this function on the data at all possible scales.  \rebuttal{Of course, the idea of considering multiscale representations of geometric data is not new~\citep{chapelleChoosingMultipleParameters2002, ozerSimilarityDomainsMachine2019, witkinScalespaceFiltering1987}; the contribution of persistent homology is to obtain a novel and theoretically tractable multiscale shape descriptor.}  More formally, persistent homology is achieved by computing the so-called {\em persistence barcode} of $f$, which is obtained by looking at all sublevel sets of the form $\{f^{-1}((-\infty,\alpha])\}_{\alpha\in\R}$, also called {\em filtration induced by $f$}, and by computing a {\em decomposition} of this filtration, that is, by recording the appearances and disappearances of topological features (connected components, loops, enclosed spheres, etc) in these sets. When such a feature appears (resp. disappears), e.g., in a sublevel set $f^{-1}((-\infty,\alpha_b])$, we call the corresponding threshold $\alpha_b$ (resp. $\alpha_d$) the {\em birth time} (resp. {\em death time}) of the topological feature, and we summarize this information in a set of intervals, or bars, called the persistence barcode $D(f):=\{(\alpha_b,\alpha_d)\}_{\alpha\in A}\subset\R \times \R\cup \{\infty \}$. 
Moreover, the bar length $\alpha_d-\alpha_b$ often serves as a proxy for the statistical significance of the corresponding feature. 

However, an inherent limitation of the formulation of persistent homology is that it can handle only a single filtration parameter $f$.  
However, in practice it is common that one has to deal with multiple parameters.  This translates into multiple filtration functions: a standard example is \rebuttal{ when one aims at obtaining meaningful topological representation of a noisy point cloud. In this case,
both feature scale and density functions are necessary
(see Appendix~\ref{app:lim})}. 
An extension of persistent homology to several filtration functions 
is called {\em multiparameter} persistent homology~\citep{Botnan2022a, Carlsson2009b}, 
and studies the topological variations of a continuous {\em multiparameter} function $f:X\rightarrow\R^n$ with $n\in\N^*$. This setting is notoriously difficult to analyze theoretically as 
\rebuttal{there is no result ensuring the existence of
an analogue of persistence barcodes, i.e.,
a decomposition into subsets of $\R^n$, each representing the lifespan of a topological feature.}

\rebuttal{
Still, it remains possible to define weaker topological invariants in this setting.
The most common one is
the so-called {\em rank invariant} (as well as its variations, such as the generalized rank invariant~\cite{Kim2021a}, and its decompositions, such as the signed barcodes~\cite{Botnan2022c}), which describes how the topological features associated to any pair of sublevel sets 
$\{x\in X: f(x)\leq \alpha\}$ and $\{x\in X: f(x)\leq \beta\}$
such that $\alpha\leq\beta$ (w.r.t. the partial order in $\R^n$),
are connected. 
The rank invariant is a construction in abstract algebra, and so the task of finding appropriate {\em representations} of this invariant, i.e., embeddings into Hilbert spaces, is critical. Hence, a number of such representations have been defined, which first approximate the rank invariant by computing persistence barcodes from several linear combinations of filtrations (often called the {\em fibered barcode}), and then aggregate known single-parameter representations for them~\citep{Corbet2019, Coskunuzer2021, Vipond2020}. This procedure has also been generalized recently~\cite{Xin2023}.}

\rebuttal{However, the rank invariant, and its associated  representations,
are known to be much less informative than
decompositions (when they exist): many functions have different decompositions yet the same rank invariants. Therefore, the aforementioned representations can  encode only limited multiparameter topological information. Instead, in this work, we focus on {\em candidate decompositions} of the function, 
in order to create descriptors that are strictly more powerful than the rank invariant.
Indeed, while there is no general decomposition theorem, there is recent work that constructs candidate decompositions in terms of simple pieces~\citep{asashibaApproximationPersistenceModules2019, Cai2020a, Loiseaux2022} that always exist but do not necessarily suffice to reconstruct all of the multiparameter information.  Nonetheless, they are strictly more informative than the rank invariant under mild conditions, are stable, and approximate the true decomposition when it exists\footnote{\rebuttal{Note that although  multiparameter persistent homology can always be decomposed as a sum of indecomposable pieces~(see \citep[Theorem 4.2]{Botnan2022a} and~\cite{deyGeneralizedPersistenceAlgorithm2021}), these decompositions are prohibitively difficult to interpret and work with.}}.}
\rebuttal{For instance, in Figure~\ref{fig:illus}, we present a bifiltration of a noisy point cloud with scale and density {\bf (left)}, and a corresponding candidate decomposition comprised of subsets of $\R^2$, each representing a topological feature {\bf (middle)}. For instance, there is a large green subset in the decomposition that represents the circle formed by the points that are not outliers (also highlighted in green in the bifiltration).}

\begin{figure}[H]
    \centering
    \includegraphics[width=0.9\textwidth]{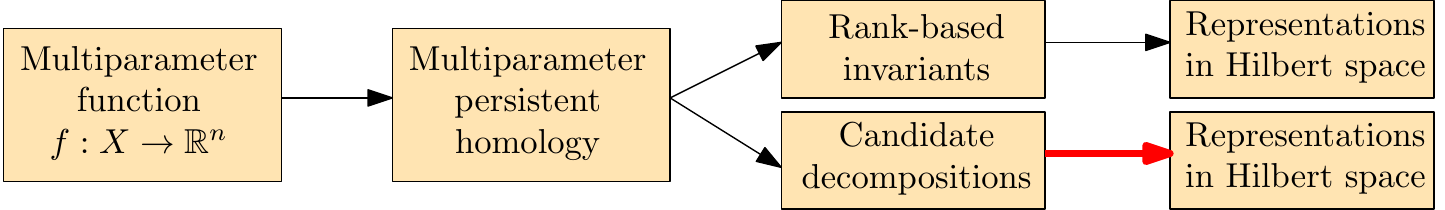}
    \caption{Common pipelines for the use of multiparameter persistent homology in data science---our work provides new contributions to the arrow highlighted in red.}
    \label{fig:pipeline}
\end{figure}

\rebuttal{Unfortunately, while more informative, candidate decompositions suffer from the same problem than the rank invariant;
they also need appropriate representations in order to be processed by standard methods.
In this work, we bridge this gap by providing new representations designed for candidate decompositions. See Figure~\ref{fig:pipeline} for a summarizing figure.}


\begin{figure}[H]
    \centering
    \includegraphics[width=0.35\textwidth]{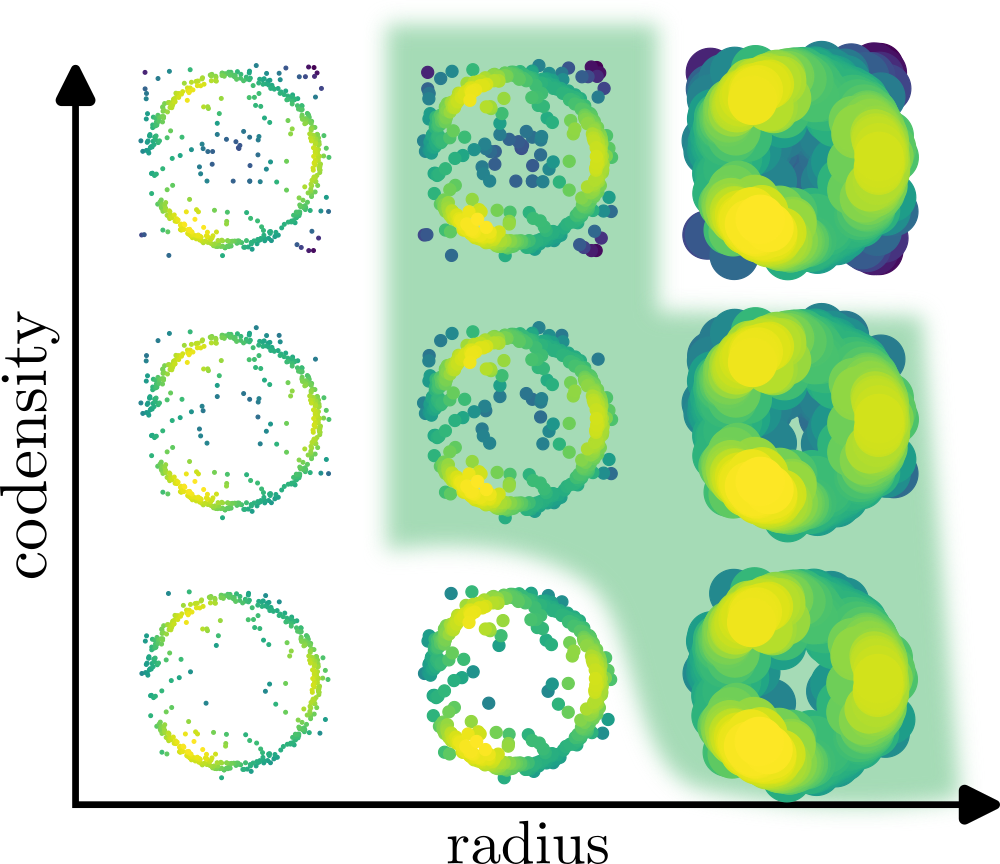}
    \includegraphics[width=0.33\textwidth]{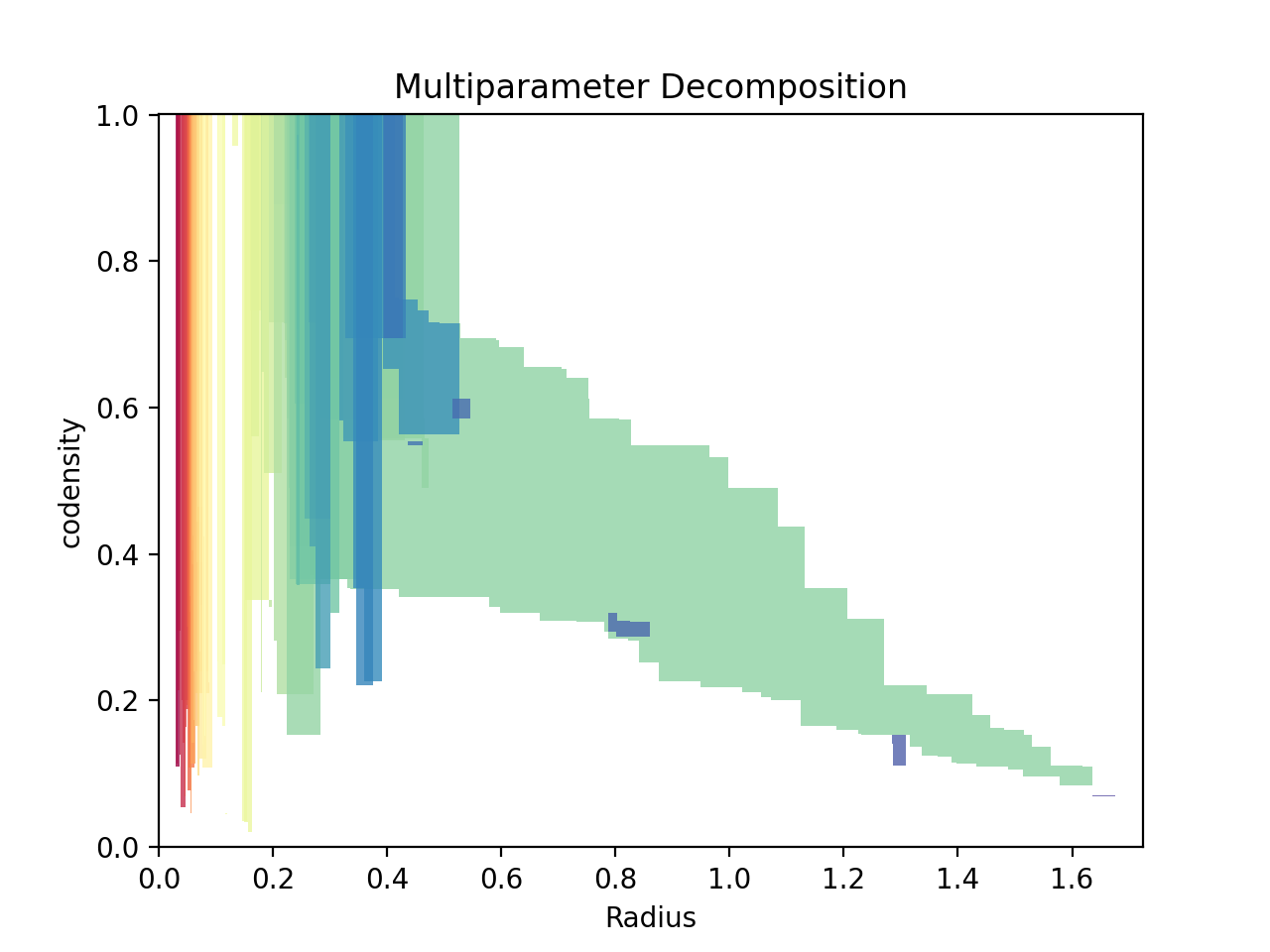}
    \includegraphics[width=0.27\textwidth]{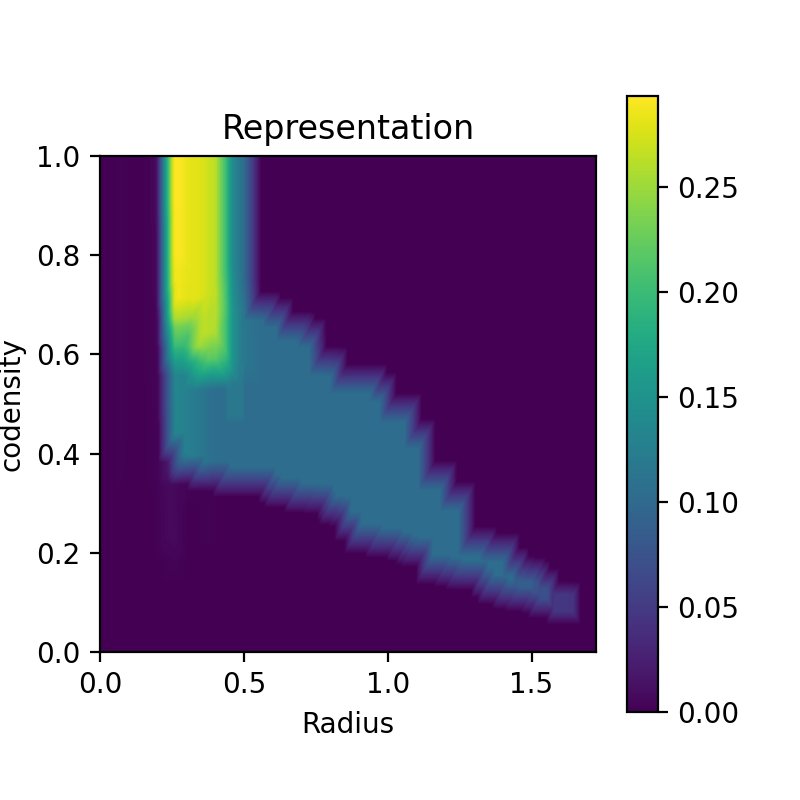}
    %
    \caption{\textbf{(left)} 
    Bi-filtration of a noisy point cloud induced by both feature scale (using unions of balls 
    with increasing radii) and (co)density. 
    The cycle highlighted in the green zone can be detected as a large subset in the corresponding candidate decomposition computed by the \texttt{MMA} method~\cite{Loiseaux2022} \textbf{(middle)}, and in our representation of it \textbf{(right)}.}
    \label{fig:illus}
\end{figure}


\paragraph{Contributions.} Our contributions in this work are listed below: 

\begin{itemize}
    \item We provide a general framework that parametrizes representations of multiparameter persistent homology decompositions (Definition~\ref{def:multipers}) and which encompasses 
    previous approaches in the literature. \rebuttal{These representations take the form of a parametrized family of continuous functions on $\R^n$ that can be binned into images for visualization and data science.} 
    \item We identify parameters in this framework that result in representations that have stability guarantees while still encoding more information
    \rebuttal{than the rank invariant}
    (see Theorem~\ref{thm:stable}).
    \item We illustrate the performance of our framework with numerical experiments: (1) We demonstrate the practical consequences of the stability theorem by measuring the statistical convergence of our representations.  (2) We achieve the best performance with the lowest runtime on several classification tasks on public data sets (see Sections~\ref{sec:convergence} and~\ref{sec:classif}).
\end{itemize}

\paragraph{Related work.} Closely related to our method is the recent contribution~\citep{Carriere2020b}, which also proposes a representation for decompositions. However, their approach, while being efficient in practice, is a heuristic with no corresponding mathematical guarantees. In particular, it is known to be unstable: similar decompositions can lead to very different representations, as shown in Appendix~\ref{app:instability}. Our approach can be understood as a subsequent generalization of the work of \citep{Carriere2020b}, with new mathematical guarantees that allow to derive, e.g., statistical rates of convergence.


\paragraph{Outline.} Our work is organized as follows. In Section~\ref{sec:background}, we recall the basics of multiparameter persistent homology. Next, in Section~\ref{sec:mpi} we present our general framework and state our associated stability result. Finally, we showcase the numerical performances of our representations in Section~\ref{sec:expes}, and we conclude in Section~\ref{sec:conclusion}.

\section{Background}\label{sec:background}


In this section, we briefly recall the basics of single and multiparameter persistent homology, and 
refer the reader to Appendix~\ref{app:simplicial_homology} and~\citep{Oudot2015, Rabadan2019} for a more complete treatment.

\paragraph{Persistent homology.}

\rebuttal{
The basic brick of persistent homology is a {\em filtered topological space $X$}, by which we mean a topological space $X$ together with a function $f \colon X \to \R$ (for instance, in Figure~\ref{fig:grow_ball}, $X=\R^2$ and $f=f_P$).
Then, given $\alpha >0$, we call $F(\alpha) := f^{-1}((-\infty,\alpha])\subseteq X$ the {\em sublevel set of $f$ at level $\alpha$}. Given levels $\alpha_1 \leq \dots \leq \alpha_n$, the corresponding sublevel sets are nested w.r.t. inclusion, i.e., one has
$F(\alpha_1) \subseteq F(\alpha_2) \subseteq \ldots \subseteq F(\alpha_i) \subseteq \ldots \subseteq F(\alpha_n)$.
This system is an example of {\em filtration} of $X$, where a filtration is generally defined as a sequence of nested subspaces $X_1\subseteq \ldots \subseteq X_i \subseteq \ldots \subseteq X$.  Then, the core idea of persistent homology is to apply the $k$th {\em homology functor} $H_k$ on each $F(\alpha_i)$. 
We do not define the homology functor explicitly here, but simply recall that
each $H_k(F(\alpha_i))$ is a vector space, 
whose basis elements represent the $k$th dimensional topological features of $F(\alpha_i)$ (connected components for $k=0$, loops for $k=1$, spheres for $k=2$, etc). Moreover, the inclusions $F(\alpha_i)\subseteq F(\alpha_{i+1})$ translate into linear maps
$H_k(F(\alpha_i))\to H_k(F(\alpha_{i+1}))$, which connect the features of $F(\alpha_i)$ and $F(\alpha_{i+1})$ together.
This allows to keep track of the topological features in the filtration, and record their levels, often called times, of appearance and disappearance. More formally,
such a sequence of vector spaces connected with linear maps $\M = H_*(F(\alpha_1))\to\dots\to H_*(F(\alpha_n))$ is
called a {\em persistence module}, and the standard decomposition theorem~\citep[Theorem 2.8]{Chazal2016} states that this module can always be decomposed
as $\M=\oplus_{i=1}^m \mathbb{I}[\alpha_{b_i},\alpha_{d_i}]$, where $\mathbb{I}[\alpha_{b_i},\alpha_{d_i}]$ stands for a module of dimension 1 (i.e., that represents a single topological feature) between $\alpha_{b_i}$ and $\alpha_{d_i}$, and dimension 0 (i.e., that represents no feature) elsewhere. It is thus convenient to summarize such a module with its {\em persistence barcode} $D(\M)=\{[\alpha_{b_i}, \alpha_{d_i}]\}_{1\leq i\leq m}$. Note that in practice, one is only given a sampling of the topological space $X$, which is usually unknown. In that case, persistence barcodes are computed using 
combinatorial models of $X$ computed from the data, called {\em simplicial complexes}. See Appendix~\ref{app:simplicial_homology}.}


\paragraph{Multiparameter persistent homology.} \rebuttal{The persistence modules defined above extend straightforwardly when there are multiple filtration functions. An $n$-filtration, or multifiltration, induced by a function $f:X\rightarrow \R^n$,
is the 
family of sublevel sets $F=\{F(\alpha)\}_{\alpha\in\R^n}$, where $F(\alpha):=\{x\in X : f(x)\leq \alpha\}$ and $\leq$ denotes the partial order of $\R^n$.
Again, applying the homology functor $H_k$ on the multifiltration $F$ induces a {\em multiparameter persistence module} $\M$.
However, contrary to the single-parameter case, the algebraic structure of such a module is very intricate, and there is no general decomposition into modules of dimension at most 1, and thus no analogue of the persistence barcode.
Instead, the {\em rank invariant} has been introduced as a weaker invariant:
it is defined, for a module $\M$, as the function ${\rm RI}:(\alpha,\beta)\mapsto {\rm rank}(\M(\alpha)\to\M(\beta))$ for any $\alpha\leq\beta$, but is also known to miss a lot of structural properties of $\M$. To remedy this, several methods have been developed to compute {\em candidate decompositions} for $\M$~\citep{asashibaApproximationPersistenceModules2019, Cai2020a, Loiseaux2022}, where a candidate decomposition is a module $\tilde\M$ that can be decomposed as $\tilde\M\simeq \oplus_{i=1}^m M_i$, where each $M_i$ is an {\em interval module}, i.e., its dimension is at most 1, and its support $\supp{M_i}:=\{\alpha\in\R^n:{\rm dim}(M_i(\alpha)) = 1\}$ is an interval of $\R^n$ \rebuttal{(see Appendix~\ref{app:distances})}. In particular, when $\M$ does decompose into intervals, candidate decompositions must agree with the true decomposition. One also often asks candidate decompositions to preserve the rank invariant. }

\paragraph{Distances.}
Finally, multiparameter persistence modules can be compared with two standard distances: the {\em interleaving} and {\em bottleneck} (or $\ell^{\infty}$) distances. Their explicit definitions are technical and not necessary for our main exposition, so we refer the reader to, e.g.,~\citep[Sections 6.1, 6.4]{Botnan2022a} \rebuttal{and Appendix~\ref{app:distances}} for more details. The {\em stability theorem}~\citep[Theorem 5.3]{lesnickTheoryInterleavingDistance2015} states that 
multiparameter persistence modules are stable: 
$\di(\M,\M')\leq \norm{f-f'}_\infty,$
where $f$ and $f'$ are continuous multiparameter functions associated to 
$\M$ and $\M'$ respectively.


\section{\reprname{}: a template for representations of candidate decompositions}\label{sec:mpi}

Even though candidate decompositions of multiparameter persistence modules are known to encode useful data information, their algebraic definitions make them not suitable for subsequent data science and machine learning purposes. Hence,
in this section, we introduce the Template Candidate Decomposition Representation (\reprname{}): a general 
framework and template system for
representations of candidate decompositions, i.e., maps defined on the space of candidate decompositions and taking values in an (implicit or explicit) Hilbert space. 

\subsection{\reprname{} definition}\label{subsec:related_work}

\paragraph{Notations.} 
In this article, by a slight abuse of notation, we will make no difference in the notations between an interval module and its support, and we will denote the restriction of an interval support $M$ to a given line $\ell$ as $\restr{M}{\ell}$.

\begin{definition}\label{def:multipers}
Let 
$\M=\oplus_{i=1}^m M_i$ be a candidate decomposition, 
and let $\mathcal M$ be the space of
interval modules. 
The {\em Template Candidate Decomposition Representation} (\reprname{}) of $\M$ is:
\begin{equation}\label{eq:vect}
    \repr_{\pop, w, \phi}(\M) = \pop(\{w(M_i)\cdot \phi(M_i)\}_{i=1}^m),
\end{equation}

where $\pop$ is a permutation invariant operation (sum, max, min, mean, etc), $w:\mathcal M\rightarrow \R$ is a weight function, and $\phi:\mathcal M\rightarrow \mathcal H$ sends any interval module to a vector in a Hilbert space $\mathcal H$.
\end{definition}

The general definition of \reprname{} is inspired from a similar framework that was introduced for single-parameter persistence with the automatic representation method {\em Perslay}~\citep{Carriere2020}.

\paragraph{Relation to previous work.} Interestingly, whenever applied on candidate decompositions that preserve the rank invariant, specific choices of $\pop$, $w$ and $\phi$ 
reproduce previous representations: 
\begin{itemize}
    \item Using $w:M_i \mapsto 1$, $\phi:M_i\mapsto\left\{
    \begin{array}{ll}
        \R^n  & \rightarrow \R  \\
         x    & \mapsto \Lambda( x, \restr {M_i} {\ell_{x}})
     \end{array}\right.$
     and $\pop=k$th maximum, where $l_x$ is the diagonal line crossing $x$, and $\Lambda(\cdot, \ell)$ denotes the tent function associated to any segment $\ell\subset\R^n$, induces the $k$th multiparameter persistence landscape~\citep{Vipond2020}.
     \item Using $w:M_i\mapsto 1$, $\phi:M_i\mapsto\left\{\begin{array}{ll}
        \R^n\times\R^n  & \rightarrow \R^d  \\
         p,q    & \mapsto w'(M_i\cap [p,q])\cdot \phi'(M_i\cap [p,q])
     \end{array}\right.$ and $\pop=\pop'$, where $\pop'$, $w'$ and $\phi'$ are the parameters of any persistence diagram representation from Perslay, induces the multiparameter persistence kernel~\citep{Corbet2019}.
     \item Using $w:M_i\mapsto {\rm vol}(M_i)$, $\phi:M_i\mapsto\left\{\begin{array}{ll}
        \R^n  & \rightarrow \R  \\
         x    & \mapsto {\rm exp}(-{\rm min}_{\ell\in L} d(x,\restr{M_i} \ell)^2 / \sigma^2)
     \end{array}\right.$ and $\pop=\sum$, where $L$ is a set of (pre-defined) diagonal lines, induces the multiparameter persistence image~\citep{Carriere2020b}. 
\end{itemize}

Recall that 
the first two approaches are
\rebuttal{built from fibered barcodes and rank invariants, and that it is easy to find persistence modules that are different yet share the same rank invariant (see~\citep[Figure 3]{vipondLocalEquivalenceMetrics2020}.}
On the other hand, the third approach uses more information about the candidate decomposition, but is known to be unstable 
\rebuttal{(see Appendix~\ref{app:instability})}.
Hence, in the next section, we focus on specific choices for the \reprname{} parameters that induce stable
yet informative
representations. 


\subsection{Metric properties}



In this section, we study specific parameters for \reprname{} (see Definition~\ref{def:multipers}) that induce representations with associated robustness properties. 
We call this subset of representations {\em Stable Candidate Decomposition Representations} (\stabreprname{}), and define them below.

\begin{definition}\label{def:weight}
The \stabreprname{} parameters are:
\begin{enumerate}
    \item the weight function 
    \rebuttal{$w:M \mapsto \sup \{\varepsilon > 0:\exists y\in\R^n\text{ s.t. }\ell_{y,\varepsilon}\subset\supp M\}$,
    where $\ell_{y,\varepsilon}$ is the segment between $y-\varepsilon\cdot [1,\dots,1]$ and $y+\varepsilon\cdot [1,\dots,1]$},
    \item the individual interval representations $\phi_\delta(M):\R^n\to\R$:
    
        (a) $\phi_\delta (M)(x) = 
        \frac 1\delta 
        \rebuttal{w(\supp M\cap R_{x,\bdelta})}$,
        \ \ \ 
        (b) 
        $\phi_\delta (M)(x) =
        \frac{1}{(2\delta)^n}
        \mathrm{vol}\left(\supp{ M}\cap  R_{x, \bdelta}\right)$,
        
        (c) 
        $\phi_\delta (M)(x) = 
        \frac{1}{(2\delta)^n} \sup_{x',\bdelta'} \left\{ {\rm vol}(R_{x',\bdelta'}) : R_{x',\bdelta'} \subseteq \supp M \cap R_{x,\bdelta}  \right\}$,
        
    
    where $R_{x,\bdelta} $ is the hypersquare $\left\{ y \in \R^n: x-\bdelta \le y \le x+\bdelta\right\}\subseteq\R^n$, $\bdelta := \delta\cdot [1,\dots,1]\in\R^n$ for any $\delta >0$, and ${\rm vol}$ denotes the volume of a set in $\R^n$.
    \item the permutation invariant operators $\pop=\sum$ and $\pop=\sup$.
\end{enumerate}
\rebuttal{In other words, our weight function is the length of the largest diagonal segment one can fit inside $\supp M$, and the interval representations (a), (b) and (c) are the largest diagonal length, volume, and largest hypersquare volume one can fit locally inside $\supp M\cap R_{x,\bdelta}$ respectively.} 

Equipped with these \stabreprname{} parameters, we can now define the two following \stabreprname{}, that can be applied on any candidate decomposition $\M=\oplus_{i=1}^m M_i$:

\noindent\begin{minipage}{.5\linewidth}
\begin{equation} \label{eq:repr_sum}
  \repr_{p,\delta}(\M) := \sum_{i=1}^m \frac {w(M_i)^p} {\sum_{j=1}^m w(M_j)^p} \phi_{\delta}(M_i), 
\end{equation}
\end{minipage}%
\begin{minipage}{.5\linewidth}
\begin{equation}\label{eq:repr_sup}
  \repr_{\infty, \delta}(\M) := \sup_{1\leq i\leq m} \phi_{\delta}(M_i).
\end{equation}
\end{minipage}

\end{definition}

These \stabreprname{} parameters allow for some trade-off between computational cost and the amount of information that is kept: (a) and (c) are very easy to compute, 
but (b) encodes more information about interval shapes. See Figure~\ref{fig:illus} (right) 
for visualizations.


\paragraph{Stability.} The main motivation for introducing \stabreprname{} parameters  is that 
the corresponding \stabreprname{} are stable in the interleaving and bottleneck distances, as stated in the following theorem.


\begin{theorem}\label{thm:stable}
    Let $\M = \oplus_{i=1}^m M_i$ and $\M' = \oplus_{j=1}^{m'} M'_j$ be two 
    candidate decompositions.
    Assume that 
    we have $ \frac{1}{m}\sum_i w(M_i), \frac{1}{m'}\sum_j w(M'_j) \ge C$, for some $C>0$. 
    Then for any 
    $\delta > 0$, one has
    \begin{align}\label{ineq:0_stable}
        \norm{ \repr_{0,\delta}(\M) - \repr_{0,\delta}(\M')}_{\infty} &\le 
        2(\db(\M ,\M') \wedge \delta)/\delta,\\
        \label{ineq:p_stable}
        \norm{ \repr_{1,\delta}(\M) - \repr_{1,\delta}(\M')}_{\infty} &\le 
        \left[ 
        4+ \frac{2}{C}
        \right]
        (\db(\M,\M') \wedge \delta)/\delta,\\
        \label{ineq:inf_stable}
        \norm{ \repr_{\infty,\delta}(\M) - \repr_{\infty,\delta}(\M')}_{\infty} &\le 
        (\di(\M,\M') \wedge \delta)/\delta,
    \end{align}
    where $\wedge$ stands for minimum.


\end{theorem}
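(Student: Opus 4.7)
The plan is to prove all three inequalities through a unified three-step strategy: (i) a pointwise Lipschitz estimate of $\phi_\delta$ and $w$ under small $\di$-interleavings of a single interval module; (ii) a direct pointwise-sup comparison for the $\sup$-type representation; (iii) a bottleneck matching aggregation for the two sum-type representations. I would first establish the Lipschitz lemma: for interval modules $M, M'$ with $\di(M, M') \le \eta$, one has $\norm{\phi_\delta(M) - \phi_\delta(M')}_\infty \le (2\eta \wedge \delta)/\delta$ and $|w(M) - w(M')| \le 2\eta$ in each of the three choices (a), (b), (c) of $\phi_\delta$. For (a), the core observation is that an $\eta$-interleaving of interval supports shifts each boundary by at most $\eta$ in $\ell^\infty$, so any diagonal segment of length $2w$ fitting in $\supp{M}\cap \rectangle{x}{\bdelta}$ contracts to a segment of length at least $2w - 2\eta$ inside $\supp{M'}\cap \rectangle{x}{\bdelta}$; combining this with the trivial $\phi_\delta \in [0,1]$ bound produces the $\wedge\delta$ clipping. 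Cases (b) and (c) reduce to analogous boundary-shift estimates on the volume of interval supports intersected with a hypersquare.

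For (\ref{ineq:inf_stable}), the pointwise-sup comparison follows immediately: $\phi_\delta$ is monotone under inclusion of supports, and $\di(\M, \M') \le \eta$ gives, for each summand $M_i$ of $\M$, a summand $M'_j$ of $\M'$ whose $\eta$-thickened support contains that of $M_i$ (and symmetrically, by the definition of $\di$ on interval decompositions recalled in Appendix~\ref{app:distances}); one then has pointwise $\phi_\delta(M_i) \le \phi_\delta(M'_j) + (2\eta \wedge \delta)/\delta$, and taking sup over $i$ and symmetrizing yields the bound. For (\ref{ineq:0_stable}) and (\ref{ineq:p_stable}), set $\epsilon = \db(\M, \M')$ and fix an $\epsilon$-optimal partial matching $\sigma$: matched pairs $(M_i, M'_{\sigma(i)})$ interleave at distance $\le \epsilon$, and each unmatched interval $M$ satisfies $w(M) \le \epsilon$. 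Decompose $\repr_{p,\delta}(\M) - \repr_{p,\delta}(\M')$ into contributions from matched pairs, bounded pointwise via Step 1 by $(2\epsilon \wedge \delta)/\delta$, and from unmatched intervals, bounded using $\phi_\delta(M) \le w(M)/\delta \wedge 1 \le (\epsilon \wedge \delta)/\delta$. For $p=0$ the uniform denominators $1/m, 1/m'$ simply sum matched and unmatched contributions to the constant $2$; for $p=1$, I would apply the identity $A/B - A'/B' = (A - A')/B + A'(B' - B)/(BB')$ and use the hypothesis $\frac{1}{m}\sum_i w(M_i) \ge C$ (and its analog for $\M'$) to lower bound $B/m, B'/m'$, controlling the correction term and producing the constant $4 + 2/C$.

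The main obstacle is the aggregation for $\repr_{1,\delta}$: both the numerator $A = \sum_i w(M_i)\phi_\delta(M_i)$ and the denominator $B = \sum_i w(M_i)$ are simultaneously perturbed under the matching, and getting the clean constant $4 + 2/C$ rather than a looser expression requires passing through the intermediate quantity $\frac{1}{B}\sum_i w(M'_{\sigma(i)})\phi_\delta(M'_{\sigma(i)})$ and carefully invoking both halves of the Step 1 Lipschitz estimate (on $\phi_\delta$ and on $w$). This is precisely the point at which the lower-bound hypothesis $C$ becomes essential, explaining why no analogous assumption is needed for $p \in \{0, \infty\}$.
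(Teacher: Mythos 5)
Your treatment of inequalities (\ref{ineq:0_stable}) and (\ref{ineq:p_stable}) is essentially the paper's argument: a per-interval Lipschitz estimate for $\phi_\delta$ and $w$ (the paper's Lemma on $w(M)=\di(M,0)$ and its Proposition bounding $\norm{\phi_\delta(M)-\phi_\delta(M')}_\infty$ by $2(\di(M,M')\wedge\delta)/\delta$), followed by aggregation along an optimal bottleneck matching and the ratio identity $A/B-A'/B'=(A-A')/B+A'(B'-B)/(BB')$, with the hypothesis $\frac 1m\sum_i w(M_i)\ge C$ controlling the denominator term. Up to bookkeeping of constants (your $|w(M)-w(M')|\le 2\eta$ is lossier than the paper's $\le\eta$, so you should check you still land on $4+2/C$ rather than something larger), this part is sound and matches the paper.

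The genuine gap is in your proof of (\ref{ineq:inf_stable}). You assert that $\di(\M,\M')\le\eta$ gives, for each summand $M_i$, a \emph{single} summand $M'_j$ whose $\eta$-thickened support contains $\supp{M_i}$, ``by the definition of $\di$ on interval decompositions.'' But $\di$ is not defined through decompositions (that is $\db$), and an $\eta$-interleaving $f:\M\to\M'_{\boldsymbol\eta}$, $g:\M'\to\M_{\boldsymbol\eta}$ need not respect the summands: from $g_{+\boldsymbol\eta}\circ f=\varphi^{+2\boldsymbol\eta}$ you only get that for each point $x$ with $x,x+2\boldsymbol\eta\in\supp{M_i}$ there is \emph{some} $j$ (depending on $x$) with $x+\boldsymbol\eta\in\supp{M'_j}$; no single $j$ need work for all of $\supp{M_i}$. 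If your claim were true it would force $\db\lesssim\di$, contradicting the fact recalled in Appendix~\ref{app:distances} that $\db$ can be arbitrarily larger than $\di$ in the multiparameter setting. As written, your argument only yields (\ref{ineq:inf_stable}) with $\db$ in place of $\di$, which is strictly weaker. The paper circumvents this precisely by never matching summands globally: it rewrites $\phi_\delta$ (case (b)) as an integral over diagonal lines of $\di(\restr{M_i}{l_y\cap R},0)$ (the volume formula lemma), and inside the integral, i.e.\ line by line, it bounds $\sup_i\inf_j\di(\restr{M_i}{l_y\cap R},\restr{M'_j}{l_y\cap R})$ by $\db(\restr{\M}{l_y\cap R},\restr{\M'}{l_y\cap R})$, which \emph{equals} the interleaving distance of the restricted modules by the single-parameter isometry theorem, and is then $\le\di(\M,\M')\wedge\delta$ by the restriction lemma for diagonal lines. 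The per-line matchings need not be coherent across lines, which is exactly the freedom your argument forecloses; you would need to add this reduction-to-lines step (or an equivalent mechanism) to recover the stated bound in $\di$.
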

A proof of Theorem~\ref{thm:stable} can be found in Appendix~\ref{app:proof}. 

\rebuttal{These results are the main theoretical contribution in this work, as the only other decomposition-based representation in the literature~\citep{Carriere2020b} has no such guarantees. The other representations~\citep{Corbet2019, Coskunuzer2021, Vipond2020, Xin2023} enjoy similar guarantees than ours, but are computed from the rank invariant and do not exploit the information contained in decompositions.}
Theorem~\ref{thm:stable} shows that \stabreprname{}s bring the best of both worlds: these representations are richer than \rebuttal{the rank invariant} and stable at the same time. 
%
We also provide an additional stability result with a similar, yet more complicated representation in Appendix~\ref{app:add}, 
whose upper bound does not involve taking minimum. 

\begin{remark}
  \stabreprname{} are injective representations: if the support of two interval modules are different, then their corresponding \stabreprname{}s (evaluated on a point that belongs to the support of one interval but not on the support of the other) will differ, provided that $\delta$ is sufficiently small.
\end{remark}


\section{Numerical Experiments} \label{sec:expes}

In this section, we illustrate the efficiency of 
our \stabreprname{}s with numerical experiments. First, we explore the stability theorem in Section~\ref{sec:convergence} by studying the convergence rates, both theoretically and empirically, of  \stabreprname{}s on various data sets, as well as their running times in Section~\ref{sec:running}.
Then, we showcase the efficiency of \stabreprname{}s on classification tasks in Section~\ref{sec:classif}. 
\rebuttal{Our code for computing \stabreprname{}s is based on the \mma~\citep{Loiseaux2022} and \gudhi~\citep{gudhi} libraries \rebuttal{for computing candidate decompositions}\footnote{
Several different approaches can be used for computing decompositions~\citep{asashibaApproximationPersistenceModules2019, Cai2020a}. In our experiments, we used \mma~\citep{Loiseaux2022} \rebuttal{because of its simplicity and rapidity}. 
} and is publicly available at \url{https://github.com/DavidLapous/multipers}. 
We also provide pseudo-code in Appendix~\ref{app:code}.
}

\subsection{Convergence rates}\label{sec:convergence}

In this section, we study the convergence rate of \stabreprname{}s with respect to the number of sampled points, when computed from specific bifiltrations. Similar to the single parameter persistence setting~\citep{Chazal2015b}, these rates are derived from  Theorem~\ref{thm:stable}. Indeed, since concentration inequalities for multiparameter persistence modules have already been described in the literature, these concentration inequalities can transfer to our representations. Note that while Equations~(\ref{eq:rate1}) and~(\ref{eq:rate2}), which provide such rates, are stated for the \stabreprname{} in (\ref{eq:repr_sup}), they also hold for the \stabreprname{} in (\ref{eq:repr_sum}).

\paragraph{Measure bifiltration.}
Let $\mu$ be a compactly supported probability measure of $ \R^D$, and let $\mu_n$ be the discrete measure associated to a sampling of $n$ points from $\mu$. 
The {\em measure bifiltration} associated to $\mu$ and $\mu_n$ is defined as $\mathcal F^\mu_{r,t}:=\{x\in\R^D : \mu(B(x,r)) \leq t \}$, where $B(x,r)$ denotes the Euclidean ball centered on $x$ with radius $r$.
Now, let $\M$ and $\M_n$ be the multiparameter persistence modules obtained from applying the homology functor on top of the measure bifiltrations $\mathcal F^\mu$ and $\mathcal F^{\mu_n}$. These modules are known to enjoy the following stability result \citep[Theorem 3.1, Proposition 2.23 (i)]{blumbergStability2parameterPersistent2021}:
$
   \di(\M,\M_n) \le d_{ \mathrm{ Pr}}(\mu, \mu_n)\leq \min(d_W^p(\mu,\mu_n)^{\frac 1 2}, d_W^p(\mu,\mu_n)^{\frac p {p+1}}),
$
where $d_W^p$ and $d_{\rm Pr}$ stand for the $p$-Wasserstein and Prokhorov distances between probability measures.
Combining these inequalities with Theorem~\ref{thm:stable}, then taking expectations and applying the concentration inequalities of the Wasserstein distance (see \citep[Theorem 3.1]{leiConvergenceConcentrationEmpirical2020} and \citep[Theorem 1]{fournierRateConvergenceWasserstein2015}) lead to:
\begin{equation}\label{eq:rate1}
    \delta\mathbb E\left[\norm{V_{\infty,\delta}(\M)-V_{\infty,\delta}(\M_n)}_\infty\right] \leq  \left(c_{p,q}\mathbb E\left(|X|^q\right) n^{-\left( \frac{1}{2p\vee d} \right)\wedge \frac{1}{p} - \frac 1 q} \log^{ \alpha / q} n\right)^{\frac{p}{p+1}},
\end{equation}
where $\vee$ stands for maximum, $\alpha =2$ if $2p=q=d$, $\alpha = 1$ if $d\neq 2p$ and $q = dp/(d-p) \wedge 2p$ or $q>d=2p$ and $\alpha = 0$ otherwise, $c_{p,q}$ is a constant that depends on $p$ and $q$, and $X$ is a random variable of law $\mu$. 

\paragraph{ \v Cech complex and density.} 
A limitation of the measure bifiltration is that it can be difficult to compute. Hence, we now focus on another, easier to compute bifiltration. 
Let $X$ be a smooth compact $d$-submanifold of $\R^D$ ($d\leq D$), and $\mu$ be a measure on $X$ with density $f$ with respect to the uniform measure on $X$.
We now define the 
bifiltration $\mathcal F^{C, f}$ 
with:
\begin{equation*}
\mathcal F_{u,v}^{C, f} :=  \mathrm{  \check Cech}(u) \cap f^{-1}([v,\infty)) = \left\{  x \in \R^D : d(x,X) \le u, f(x)\ge v \right\}.
\end{equation*}
Moreover, given a set $X_n$ of $n$ points sampled from $\mu$, we also consider the approximate bifiltration ${\mathcal F}^{ C, f_n}$,
where $f_n\colon X \to \R$ is an estimation of $f$ (such as, e.g., a kernel density estimator).
Let $\M$ and $\M_n$ be the multiparameter persistence modules associated to $\mathcal F^{ C, f}$ and ${\mathcal F}^{ C, f_n}$.
Then, the stability of the interleaving distance \citep[Theorem 5.3]{lesnickTheoryInterleavingDistance2015} ensures:
\begin{equation*}
	\di(\M,\M_n)\leq \norm{ f - f_n}_{\infty} \vee d_H(X,X_n),
\end{equation*}
where $d_H$ stands for the Hausdorff distance. Moreover, concentration inequalities for the Hausdorff distance and kernel density estimators are also available in the literature 
(see \citep[Theorem 4]{Chazal2015b} and  \citep[Corollary 15]{kimUniformConvergenceRate2019}). More precisely, when the density $f$ is $L$-Lipschitz and bounded from above and
from below, i.e., when $0 < f_{\min} \le f \le f_{\max}<\infty$, and when $f_n$ is a kernel density estimator of $f$ with associated kernel $k$, one has:
\begin{equation*}
	\mathbb E(d_H(X,X_n)) \lesssim \left( \frac{\log n}{n} \right)^{\frac 1 d}
	\text{ and }
	\mathbb E(\norm{ f - f_n}_{\infty}) \lesssim L h_n 
 +\sqrt{\frac{\log(1/h_n)}{nh_n^d}},\end{equation*}
where $h_n$ is the (adaptive) bandwidth of the kernel $k$
. In particular, if $\mu$ is a measure comparable to the uniform measure of a $d=2$-manifold, then for any stationary sequence $h_n:=h>0$, and considering a Gaussian kernel $k$, one has:
\begin{equation}\label{eq:rate2}
	\delta \mathbb E \left[\norm{\repr_{\infty,\delta}(\M) - \repr_{\infty,\delta}(\M_n)}_\infty\right]
 	\lesssim
	\sqrt{ \frac{\log n }{ n} }  +L h.
\end{equation}
\paragraph{Empirical convergence rates.} Now that we have established the theoretical convergence rates of \stabreprname{}s, we estimate and validate them empirically on data sets. 
We will first study a synthetic data set and then a real data set of point clouds obtained with immunohistochemistry. 
\rebuttal{We also illustrate how the stability of \stabreprname{}s (stated in Theorem~\ref{thm:stable}) is critical for obtaining such convergence in Appendix~\ref{app:instability}, where we show that our main competitor, the multiparameter persistence image~\citep{Carriere2020}, is unstable and thus cannot achieve convergence, both theoretically and numerically.}

{\em Annulus with non-uniform density.} In this synthetic example, we generate an annulus of 25,000 points in $\R^2$ with a non-uniform density, displayed in Figure~\ref{fig:annulus}. Then, we compute the bifiltration $\mathcal F^{C, f_n}$ corresponding to the Alpha filtration and the sublevel set filtration of a kernel density estimator, with bandwidth parameter $h=0.1$, on the complete Alpha simplicial complex. 
%
Finally, we compute the \rebuttal{candidate decompositions} and associated \stabreprname{}s 
of the associated multiparameter module (in homology dimension 1), and their normalized distances to the target representation, using either $\norm{\cdot}_2^2$ or $\norm{\cdot}_\infty$.
%
The corresponding distances for various number of sample points are displayed in log-log plots in Figure~\ref{fig:annulus_rate}. 
One can see that the empirical rate is roughly consistent with the theoretical one ($-1/2$ for $\norm{\cdot}_\infty$ and $-1$ for $\norm{\cdot}_2$), even when $p\neq \infty$ (in which case our \stabreprname{}s are stable for $\db$ but theoretically not for $\di$).

\begin{figure}[H]
\centering
    \begin{subfigure}[t]{0.3\textwidth}
    \centering
    \includegraphics[width=\textwidth]{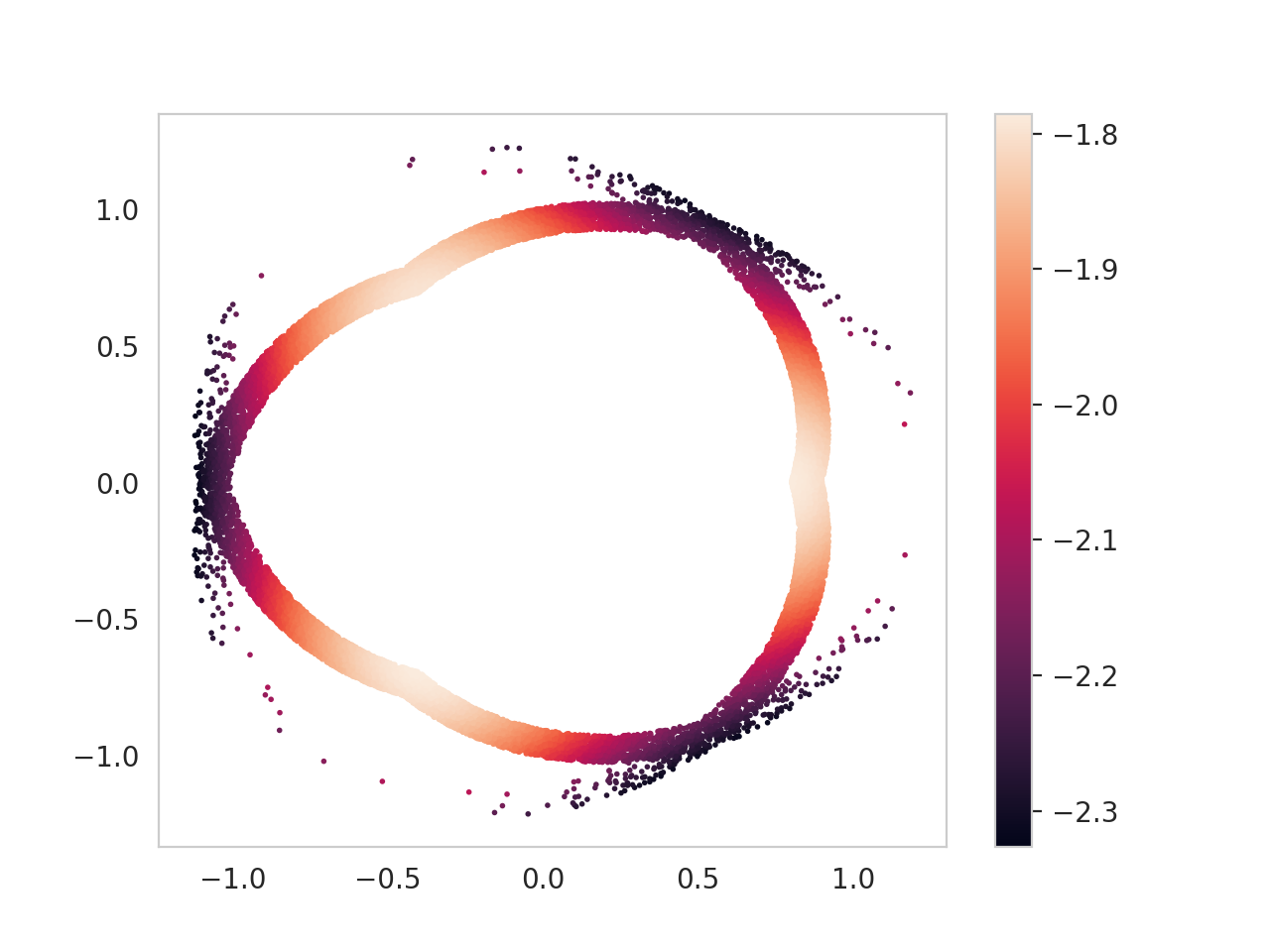}
	\caption{
 Scatter plot of the synthetic data set colored by a kernel density estimator. 
 }
 \label{fig:annulus} 
    \end{subfigure}
\
    \begin{subfigure}[t]{0.65\textwidth}
    \centering
   	\includegraphics[width = 0.49\textwidth]{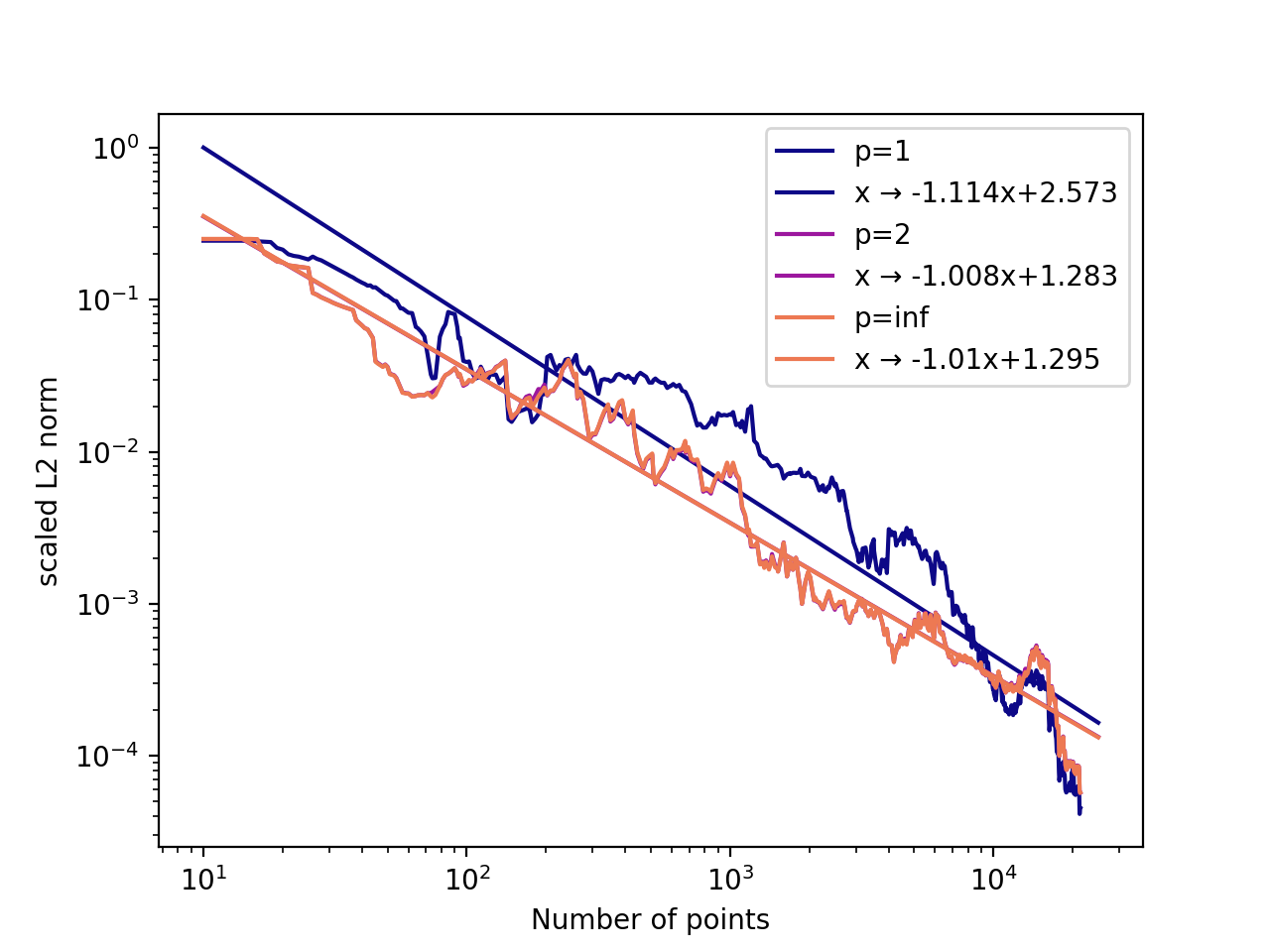}
    \includegraphics[width = 0.49\textwidth]{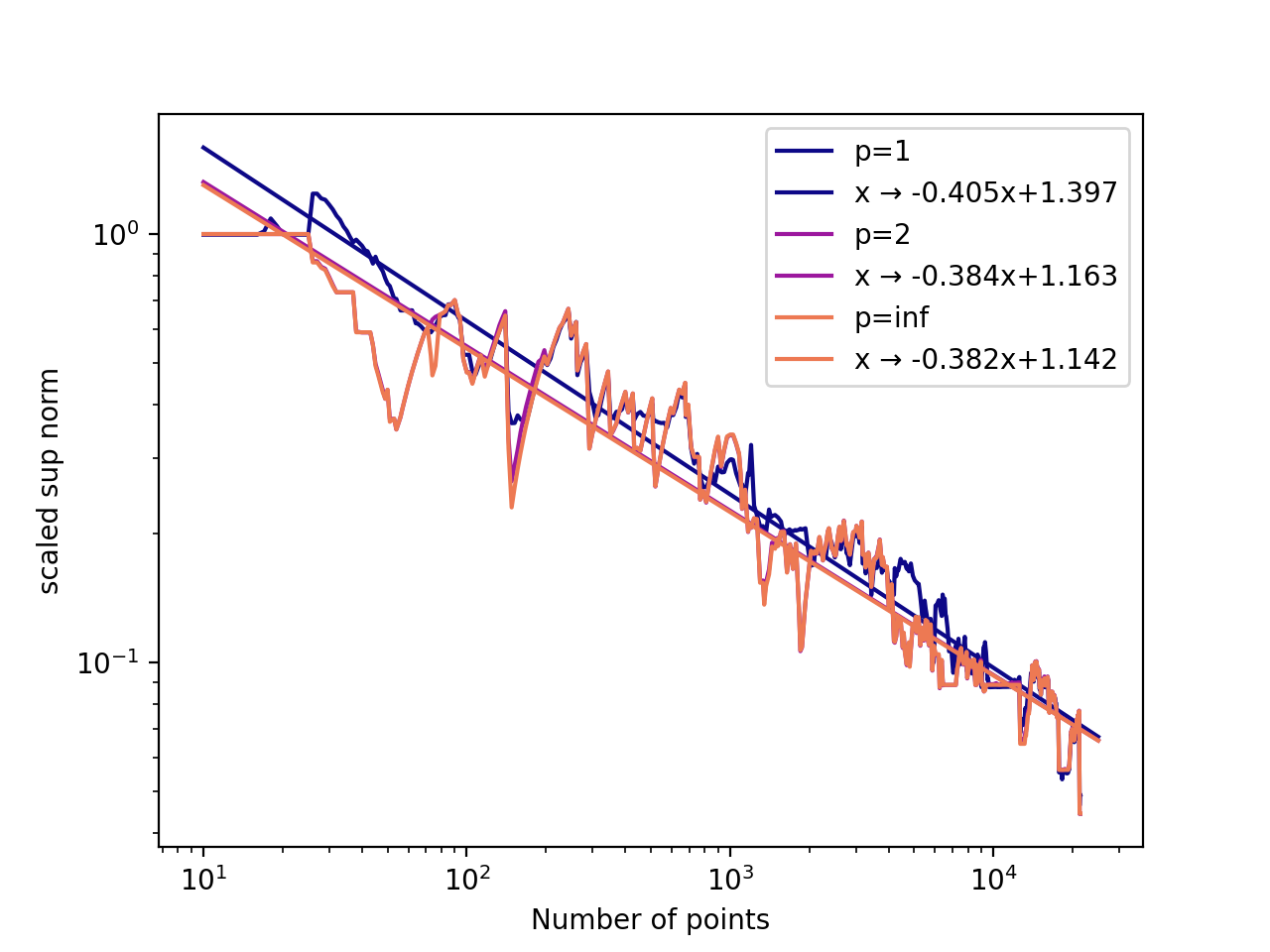}
	\caption{\label{fig:annulus_rate} \textbf{(left)} $\norm{\cdot}^2_2$ and \textbf{(right)} $\norm{\cdot}_\infty$ between the target representation and the empirical one w.r.t. $n$.
     \label{fig:synth_images}}
    \end{subfigure}
\caption{\label{fig:expsynth} Convergence rate of synthetic data set.}
\end{figure}




{\em Immunohistochemistry data.} In our second experiment, we consider a point cloud representing cells, taken  from~\citep{Vipond2021}, see Figure~\ref{fig:immuno_large}. These cells are given with biological markers, which are typically used to assess, e.g., cell types and functions. In this experiment, we first triangulate the point cloud by placing a $100\times 100$ grid on top of it. Then, we filter this grid using the sublevel set filtrations of kernel density estimators (with Gaussian kernel and bandwidth $h=1$) associated to the CD8 and CD68 biological markers for immune cells. Finally, we compute the associated \rebuttal{candidate decompositions} of the multiparameter modules in homology dimensions 0 and 1, and we compute and concatenate their corresponding \stabreprname{}s.
%
The convergence rates are displayed in Figure~\ref{fig:immuno_2}. 
Similar to the previous experiment, 
the theoretical convergence rate of our representations is upper bounded by the one for kernel density estimators with the $\infty$-norm, which is of order $\frac 1 {\sqrt{n}}$ with respect to the number $n$ of sampled points. 
Again, the observed and theoretical convergence rates are consistent. 

\begin{figure}[H]
\centering
    \begin{subfigure}[t]{0.3\textwidth}
    \centering
    \includegraphics[width = \textwidth]{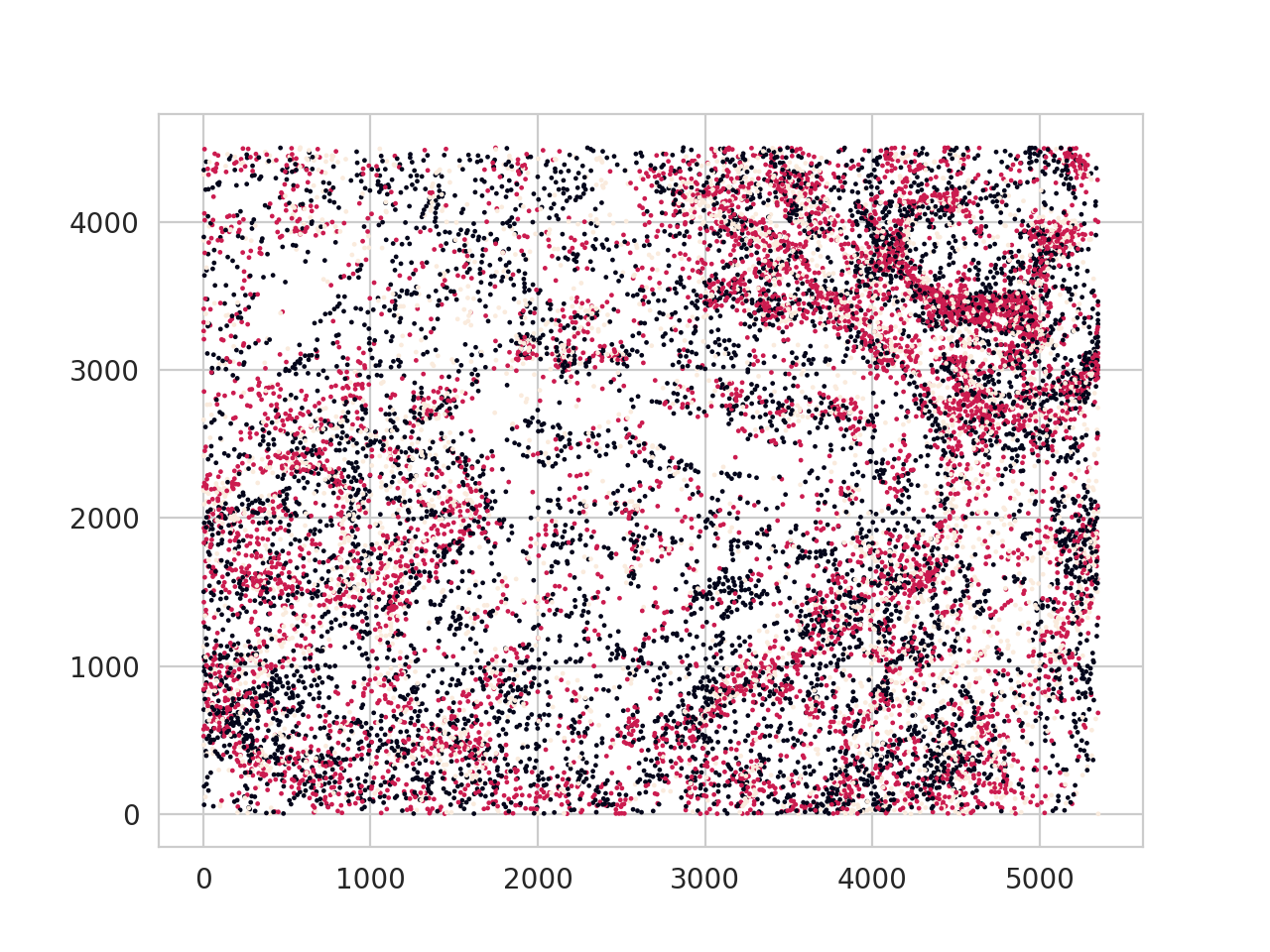}
    \caption{
    Point cloud of cells colored by CD8 (red)  and CD68 (black).
    }
    \label{fig:immuno_large}
    \end{subfigure}
    \ 
    \begin{subfigure}[t]{0.65\textwidth}
    \centering
    \includegraphics[width=0.49\textwidth]{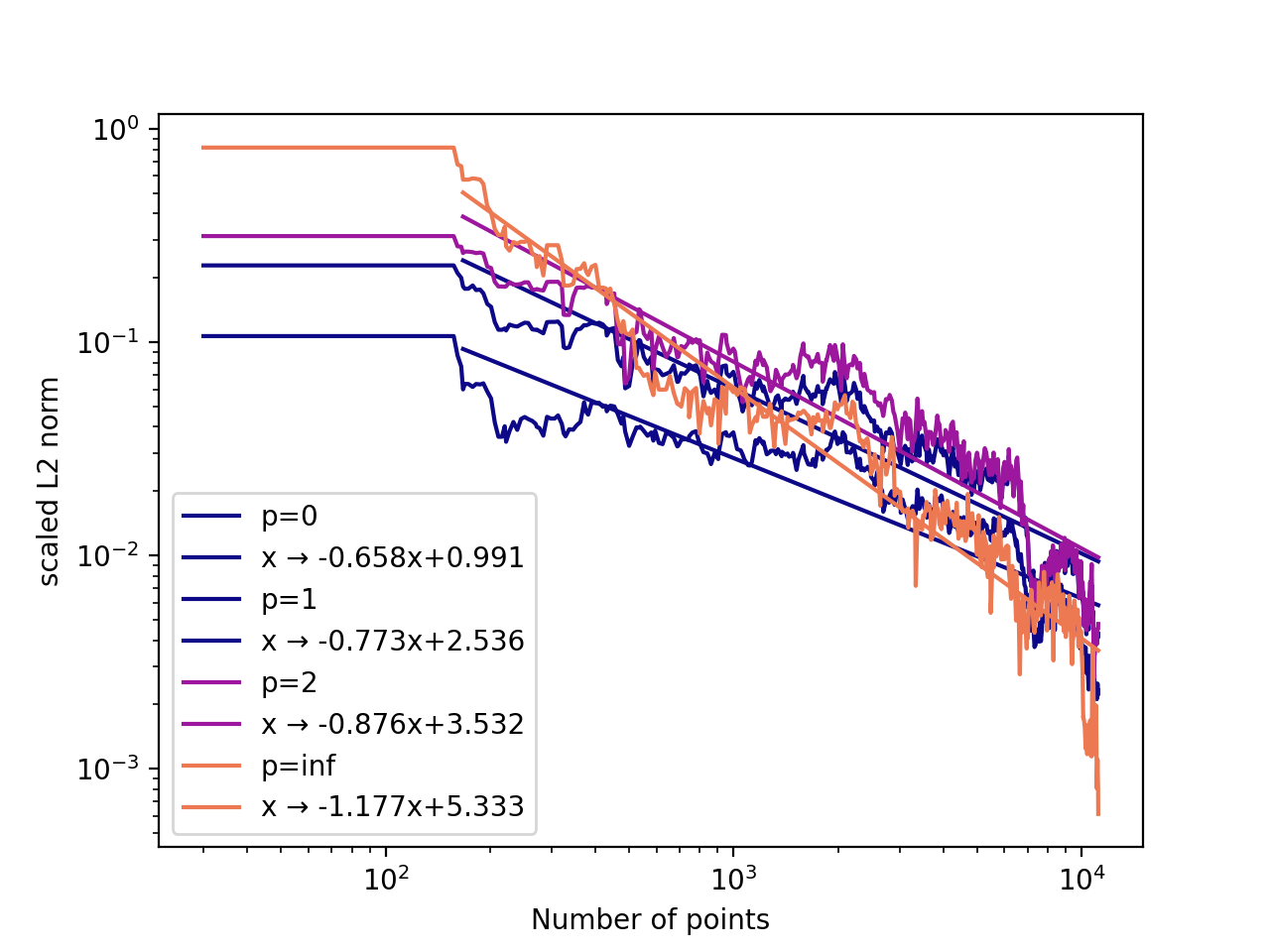}
    \includegraphics[width=0.49\textwidth]{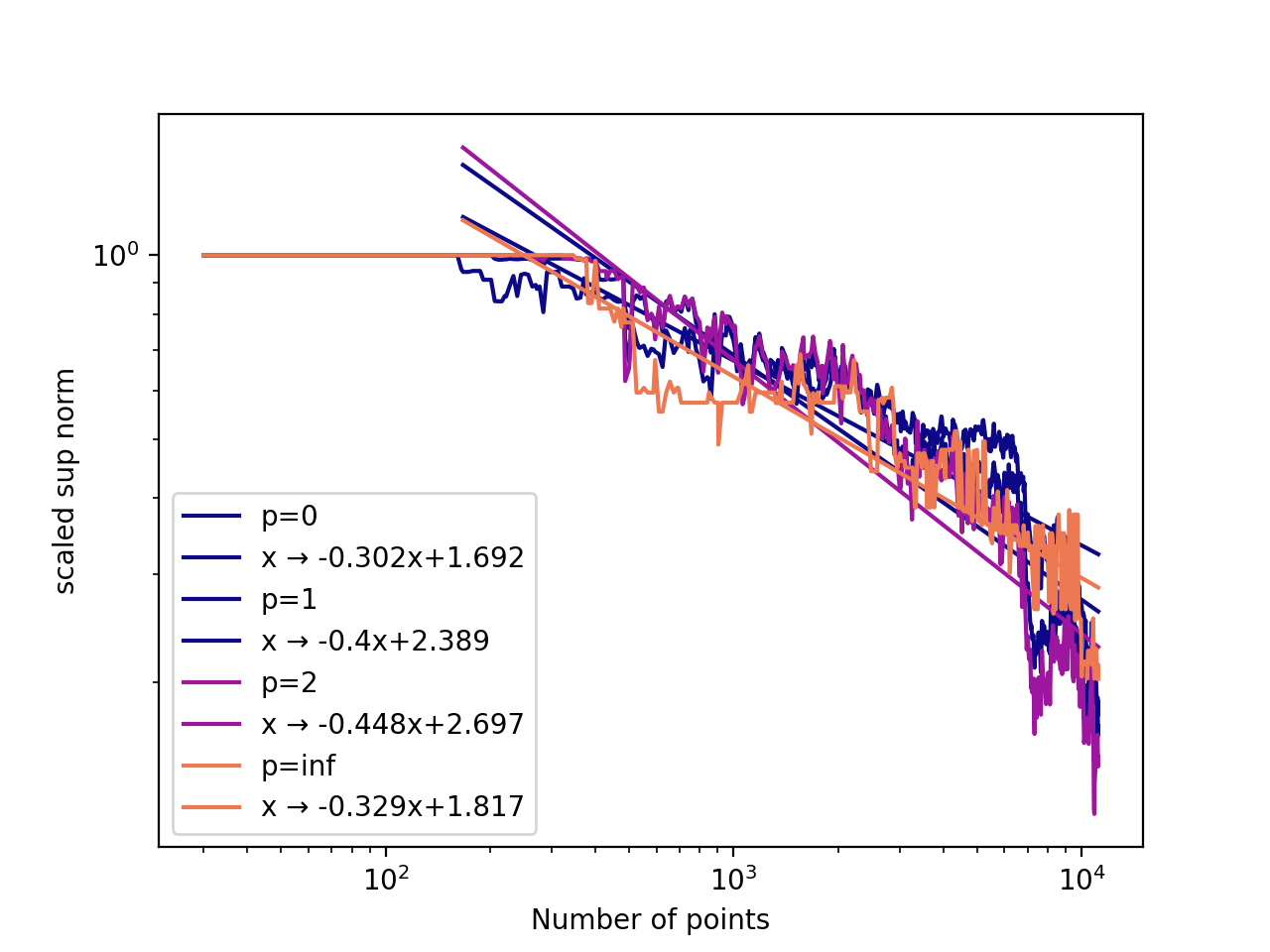}
    \caption{
    \textbf{(left)} $\norm{\cdot}_2^2$ and \textbf{(right)} $\norm{\cdot}_\infty$ distances between the target representation and the empirical one w.r.t. $n$.
    \label{fig:immuno_2}}
	\end{subfigure}
\caption{\label{fig:immuno} Convergence rate of immunohistochemistry data set.}
\end{figure}



\subsection{Running time comparisons}\label{sec:running}

\rebuttal{In this section, we provide running time comparisons between \stabreprname{}s and the MPI and MPL representations.}
We provide results in Table~\ref{tab:times}, where it can be seen that \stabreprname{}s (computed on the pinched annulus and immunohistochemistry data sets defined above) can be computed much faster than the other representations, by a factor of at least 25 (all representations are evaluated on grids of sizes $50\times 50$ and  $100\times 100$, and we provide the maximum running time over $p\in\{0,1,\infty\}$). All computations were done using a Ryzen 4800 laptop CPU, with 16GB of RAM. \rebuttal{Interestingly, this sparse and fast implementation based on corners can also be used to improve on the running time of the multiparameter persistence landscapes (MPL), as one can see from Algorithm~\ref{alg:corner_landscape} in Appendix~\ref{app:code} (which retrieves the persistence barcode of a multiparameter persistence module along a given line; this is enough to compute the MPL) and from the
second line of Table~\ref{tab:times}.}

\subsection{Classification}\label{sec:classif}

Finally,
we illustrate the efficiency of \stabreprname{}s by using them for classification purposes.  \rebuttal{We show that they perform comparably or better than existing topological representations as well as standard baselines on several UCR benchmark data sets and on immunohistochemistry data. Concerning UCR, we work with point clouds obtained from time delay embedding applied on the UCR time series, following the procedure of~\citep{Carriere2020b}.  
}
In both tasks, every point cloud has a label (corresponding to the type of its cells in the immunohistochemistry data, and to pre-defined labels in the UCR data), and our goal is to check whether we can predict these labels by training classifiers on \stabreprname{}s computed from the same bifiltration as in the previous section. 

We compare the performances of our \stabreprname{}s (evaluated on a $50\times 50$ grid) to the one of the multiparameter persistence landscape (MPL)~\citep{Vipond2020}, kernel (MPK)~\citep{Corbet2019} and images (MPI)~\citep{Carriere2020b}, as well as their single-parameter counterparts (P-L, P-I and PSS-K)
\footnote{Note that the sizes of the point clouds \rebuttal{in the immunohistochemistry data} were too large for MPK and MPI using the code provided in \url{https://github.com/MathieuCarriere/multipers}, and that all three single-parameter representations had roughly the same performance, so we only display one, denoted as P.}. \rebuttal{We also compare to some non-topological baselines: we used the standard Ripley function 
evaluated on 100 evenly spaced samples in $[0,1]$ for immunohistochemistry data, and k-NN classifiers with 
three difference distances for the UCR time series (denoted by B1, B2, B3), as suggested in~\citep{Dau2018}.}
\rebuttal{All scores on the immunohistochemistry data were computed with 5 folds after cross-validating a few classifiers (random forests, support vector machines and xgboost).
}
\rebuttal{For the time series data,} our accuracy scores were obtained after also cross-validating 
the following \stabreprname{} parameters; $p \in \{0,1\}$, $\mathrm {op} \in \{\mathrm{sum}, \mathrm{mean}\}$, $\delta\in \{0.01, 0.1, 0.5,1\}$, \rebuttal{$h\in \{0.1,0.5,1,1.5\}$} with homology dimensions $0$ and $1$. 
All results can be found in 
Table~\ref{tab:TSres-small}, and the full table is in Appendix~\ref{app:ucr}
 (note that there are no variances for UCR data since pre-defined train/test splits were provided). One can see that \stabreprname{} \rebuttal{almost always outperform topological baselines and are comparable to the standard baselines on the UCR benchmarks.  Most notably, \stabreprname{} radically outperforms the standard baseline and competing topological measures on the immunohistochemistry data set.}

\begin{minipage}{0.6\textwidth}
\begin{table}[H]
\vspace{-2mm}
\centering 
\resizebox{\columnwidth}{!}{%
\begin{tabular}{|c || c c c || c c c || c c c | c |}
\hline
Dataset & B1 & B2 & B3  & PSS-K & P-I & P-L & MPK & MPL & MPI & \stabreprname{} \\ 
\hline
\texttt{DPOAG}   & 62.6 & 62.6 & {\bf \color{red} 77.0} & 
{\bf \color{red} 76.9} & 69.8 & 70.5 & 
67.6 & 70.5 & {\bf \red{71.9}} 
& {\bf \color{red}71.9 }
\\ 
\texttt{DPOC}    & 71.7& {\bf \color{red} 72.5} & 71.7& 
47.5 & 67.4 &66.3 &
{\bf \color{red}74.6} & 69.6 & 71.7 
&  73.8
\\ 
\texttt{PPOAG} & 78.5 & 78.5 & {\bf \color{red} 80.5 }& 
 75.9 & {\bf \color{red} 82.0} &78.0 & 
78.0 & 78.5 & 81.0
& {\bf \color{red} 81.9}
\\ 
\texttt{PPOC}  & {\bf \color{red} 80.8} &79.0 &78.4 & 
78.4 & 72.2 & 72.5& 
78.7 & 78.7 & {\bf \color{red} 81.8}
& 79.4
\\
\texttt{PPTW}& 70.7&{\bf \color{red} 75.6} &{\bf \color{red} 75.6} & 
 61.4 & 72.2 &73.7 & 
{\bf \color{red} 79.5} & 73.2 & 76.1 
& 75.6 
\\ 
\texttt{IPD}               & {\bf \color{red} 95.5}& {\bf \color{red} 95.5}& 95.0& 
 - &64.7 &61.1&
80.7 & 78.6 & 71.9
& {\bf \color{red} 81.2 }
\\ 
\texttt{GP}                       & {\bf \color{red} 91.3}& {\bf \color{red} 91.3}& 90.7&
90.6 & 84.7 &80.0&
88.7 & 94.0 & 90.7 
&  {\bf \color{red}96.3 }
\\ 
\texttt{GPAS}                & 89.9& {\bf \color{red} 96.5}& 91.8& 
- & 84.5 &87.0&
{\bf \color{red}93.0} & 85.1 & 90.5 
& 88.0 
\\ 
\texttt{GPMVF}       & 97.5 & 97.5& {\bf \color{red} 99.7}& 
- & 88.3 &87.3&
{\bf \color{red} 96.8} & 88.3 & 95.9 
&  95.3 
\\ 
\texttt{PC}                      & {\bf \color{red} 93.3} & 92.2&87.8 & 
- & 83.4 &76.7&
85.6 & 84.4 & 86.7 
& {\bf \color{red}93.1 }
\\ 
\hline
\hline
 & \multicolumn{3}{c|}{Ripley} & \multicolumn{3}{|c|}{P} & \multicolumn{3}{|c|}{MPL} & \stabreprname{}\\
 \hline
 immuno & \multicolumn{3}{c|}{67.2 $\pm$ 2.3} & \multicolumn{3}{|c|}{60.7 $\pm$ 4.2} & \multicolumn{3}{|c|}{65.3 $\pm$ 3.0} & {\bf \color{red}91.4 $\pm$ 1.6}
 \\
 \hline
\end{tabular} }
\vspace{3mm}
\caption{Scores for time series and immunohistochemistry.}
\label{tab:TSres-small}
\end{table}
\end{minipage}
\ \ \ 
\begin{minipage}{0.35\textwidth}
\begin{table}[H]
\centering 
\resizebox{\columnwidth}{!}{%
    \begin{tabular}{|c|c|c|}
        \hline
          & Annulus & Immuno \\
          \hline
         Ours (\stabreprname{}) & $250ms\pm 2ms$  &  $275 ms \pm 9.8ms$\\
        Ours (MPL) & $36.9ms\pm 0.8ms$& $65.9ms \pm 0.9ms$\\
         \hline
         MPI (50) &$6.43s\pm 25ms$ &$5.67s \pm 23.3ms$\\
         MPL (50) & $17s\pm 39ms$& $15.6s \pm 14ms$\\
         \hline
         MPI (100) & $13.1s\pm 125ms$& $11.65s \pm 7.9ms$\\
         MPL (100) &$35s\pm 193ms$& $31.3s \pm 23.3ms$\\
         \hline
    \end{tabular}}
\centering
\vspace{3mm}
\caption{
Running times for \stabreprname{}s and competitors.}
\label{tab:times}
\end{table}
\end{minipage}



\vspace{-3mm}
\section{Conclusion}\label{sec:conclusion}

In this article, we study the general question of representing decompositions of multiparameter persistence modules in Topological Data Analysis. We first introduce \reprname{}: a general 
template framework  
including specific representations (called \stabreprname{})
that are provably stable.
Our experiments show that \stabreprname{} is superior to the state of the art.

{\bf Limitations.} (1) Our current \reprname{} parameter selection is currently done through cross-validation, which can be very time consuming and limits the number of parameters to choose from. (2) Our classification experiments were mostly illustrative. In particular, it would be useful to investigate more thoroughly on the influence of the \reprname{} and \stabreprname{} parameters, as well as the number of filtrations, on the classification scores. (3) In order to generate finite-dimensional vectors, we evaluated \reprname{} and \stabreprname{} on finite grids, which limited their discriminative powers when fine grids were too costly to compute.

{\bf Future work.} 
(1) Since \reprname{} is similar to the Perslay framework of single parameter persistence~\citep{Carriere2020} and since, in this work, each of the framework parameter was optimized by a neural network, it is thus natural to investigate whether one can optimize \reprname{} parameters in a data-driven way as well, so as to be able to avoid cross-validation.
(2) In our numerical applications, we focused on representations computed off of \texttt{MMA} decompositions~\citep{Loiseaux2022}. In the future, we plan to investigate whether working with other decomposition methods~\citep{asashibaApproximationPersistenceModules2019, Cai2020a} lead to better numerical performance when combined with our representation framework. 

\bibliographystyle{abbrv}
\bibliography{biblio}

\appendix
\newpage

\section{Limitation of single-parameter persistent homology}\label{app:lim}

\rebuttal{A standard example of single-parameter filtration for point clouds, called the {\em {\v C}ech filtration}, is to consider $f:x \mapsto d_P(x):=\min{p\in P}\norm{x-p}$, where $P$ is a point cloud. The sublevel sets of this function are balls centered on the points in $P$ with growing radii, and the corresponding persistence barcode contains the topological features formed by $P$. See Figure~\ref{fig:first}. When the radius of the balls is small, they form three connected components {\bf (upper left)}, identified as the three long red bars in the barcode {\bf (lower)}. When this radius is moderate, a cycle is formed {\bf (upper middle)}, identified as the long blue bar in the barcode {\bf (lower)}. Finally, when the radius is large, the balls cover the whole Euclidean plane, and no topological features are left, except for one connected component, that never dies {\bf (upper right)}.}

\begin{figure}[H]
\begin{subfigure}{0.48\textwidth}
    \centering
    \includegraphics[width=.3\textwidth]{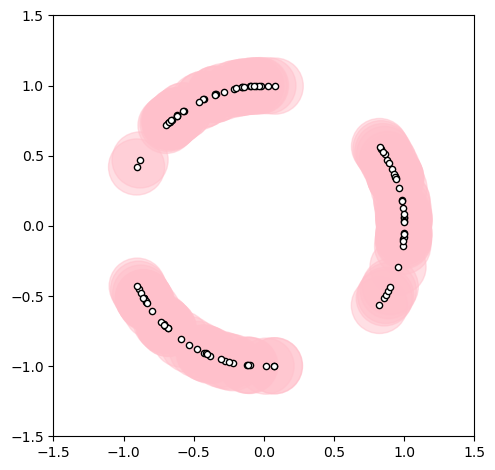}
    \includegraphics[width=.3\textwidth]{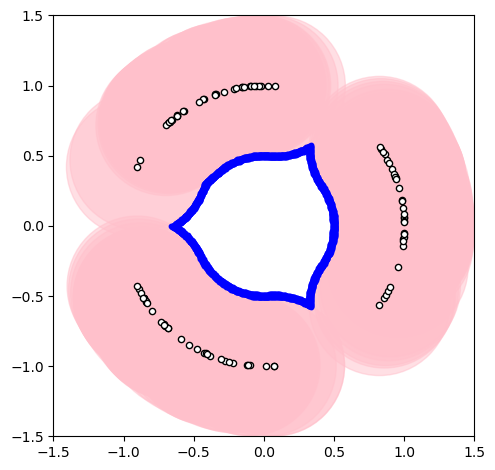}
    \includegraphics[width=.3\textwidth]{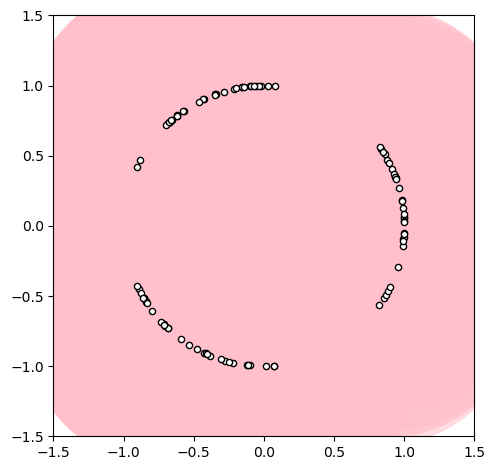}
    \includegraphics[width=.5\textwidth]{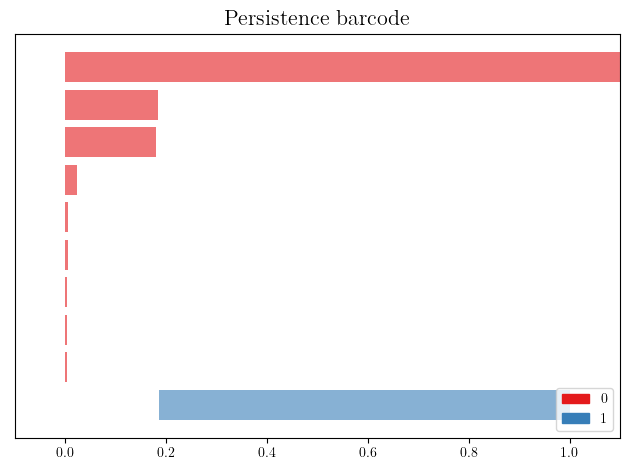}    
    \caption{Persistence barcode obtained from growing balls on a clean point cloud.}
    \label{fig:first}
\end{subfigure}
\hfill
\begin{subfigure}{0.48\textwidth}
    \centering
    \includegraphics[width=.3\textwidth]{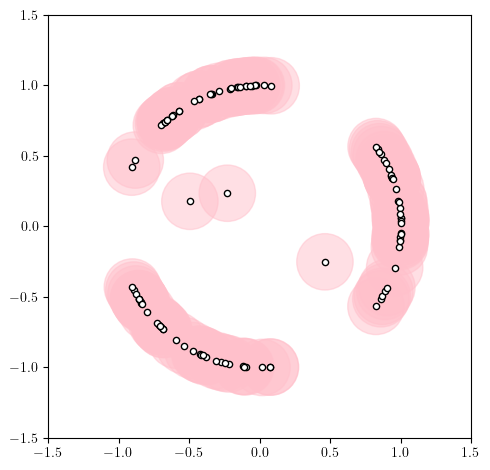}
    \includegraphics[width=.3\textwidth]{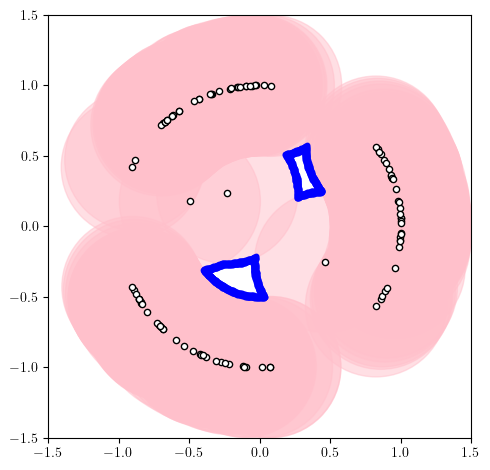}
    \includegraphics[width=.3\textwidth]{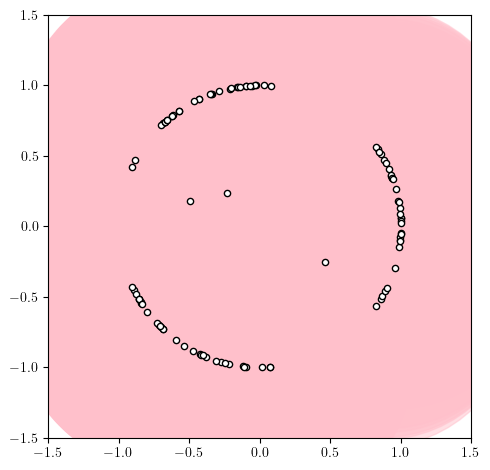}
    \includegraphics[width=.5\textwidth]{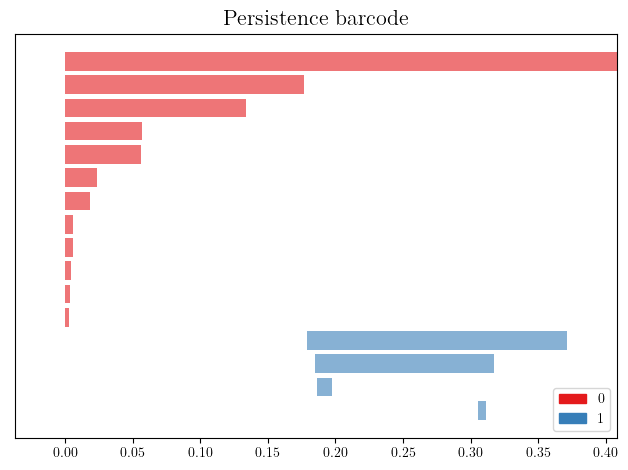} 
    \caption{Persistence barcode obtained from growing balls on a noisy point cloud.}
    \label{fig:third}
\end{subfigure}

    \caption{\label{fig:grow_ball} \rebuttal{
    Example of persistence barcode construction for point clouds using {\v C}ech filtration. In the middle sublevel sets, we highlight the topological cycles in blue.}}
\end{figure}

\rebuttal{
The limitation of the {\v C}ech filtration alone for noisy point cloud is illustrated in
Figure~\ref{fig:third}, where only feature scale is used, and the presence of three outlier points messes up the persistence barcode entirely, since they induce the appearance of two cycles instead of one. In order to 
remedy this, one could remove points whose density is smaller than a given threshold, and then process the trimmed point cloud as before, but this requires using an arbitrary threshold choice to decide which points should be considered outliers or not.
}

\section{Unstability of MPI}\label{app:instability}

\rebuttal{
In this section, we provide theoretical and experimental evidence that the multiparameter persistence image (MPI)~\citep{Carriere2020}, which is another decomposition-based representation from the literature, suffers from lack of stability as it does not enjoy guarantees such as the ones provided of \stabreprname{}s by Theorem~\ref{thm:stable}.
There two main sources of instability in the MPI. The first one is due to the discretization induced by the lines: since it is obtained by placing Gaussian functions on top of slices of the intervals in the decompositions, if the Gaussian bandwidth $\sigma$ (see previous work, third item in Section~\ref{subsec:related_work}) is too small w.r.t. the distance between consecutive lines in $L$,   discretization effects can arise, as illustrated in Figure~\ref{fig:unstab_lines}, in which the intervals do not appear as continuous shapes.
}

\begin{figure}[h]
    \centering
    \includegraphics[width=0.4\textwidth]{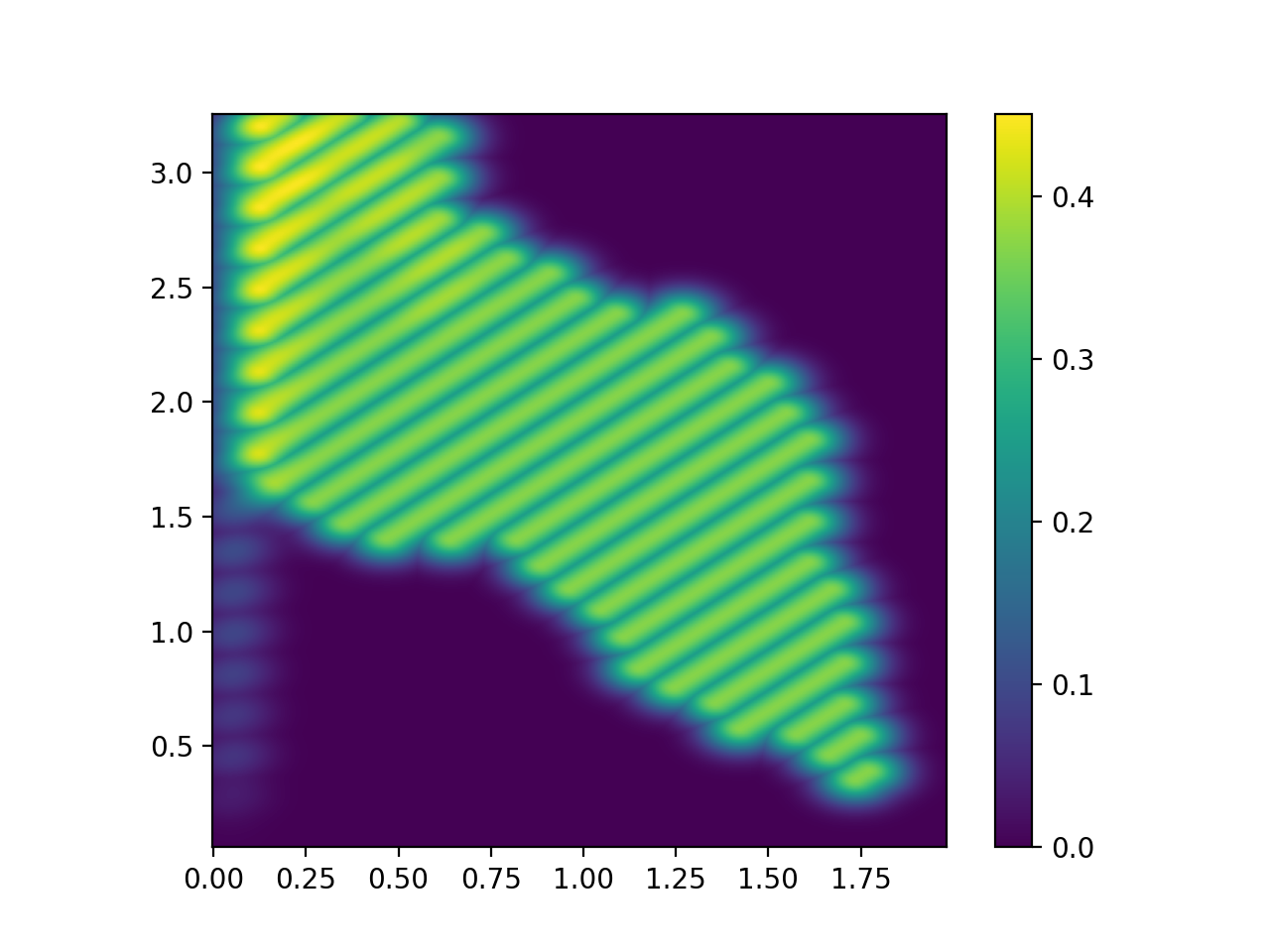}
    \includegraphics[width=0.4\textwidth]{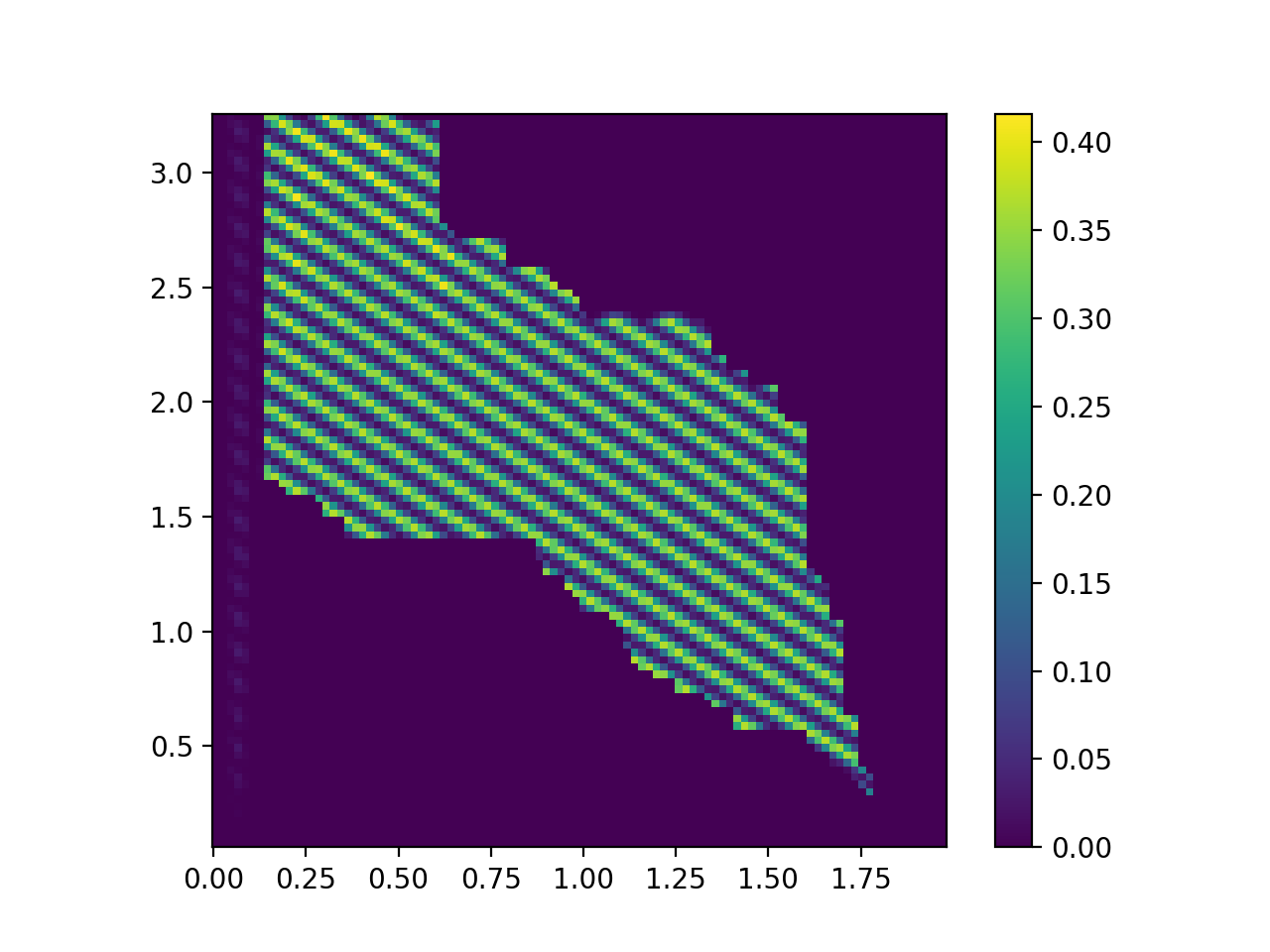}
    \caption{\rebuttal{Examples of numerical artifacts occurring when the number of lines is too small w.r.t. the Gaussian bandwidth, for two different families of lines. 
    }}
    \label{fig:unstab_lines}
\end{figure}

\rebuttal{
The second problem comes from the weight function: it is easy to build decompositions that are close in the interleaving distance, yet whose interval volumes are very different. See~\citep[Figure 15]{botnanAlgebraicStabilityZigzag2018}, and
Figure~\ref{fig:unstab_volume}, in which
we show a family of modules $\M^\epsilon$ {\bf (left)} made of a single interval built from connecting two squares with a bridge of diameter $\epsilon > 0$. We also show the distances between $\M^\epsilon$ and the limit module $\M$ made of two squares with no bridge, through their MPI and \stabreprname{}s {\bf (right)}. Even though the interleaving distance between $\M^\epsilon$ and $\M$ goes to zero as $\epsilon$ goes to zero, the distances between MPI representations converge to a positive constant, whereas the distances between \stabreprname{}s goes to zero.
}

\begin{figure}[h]
    \centering
    \includegraphics[width=.45\textwidth]{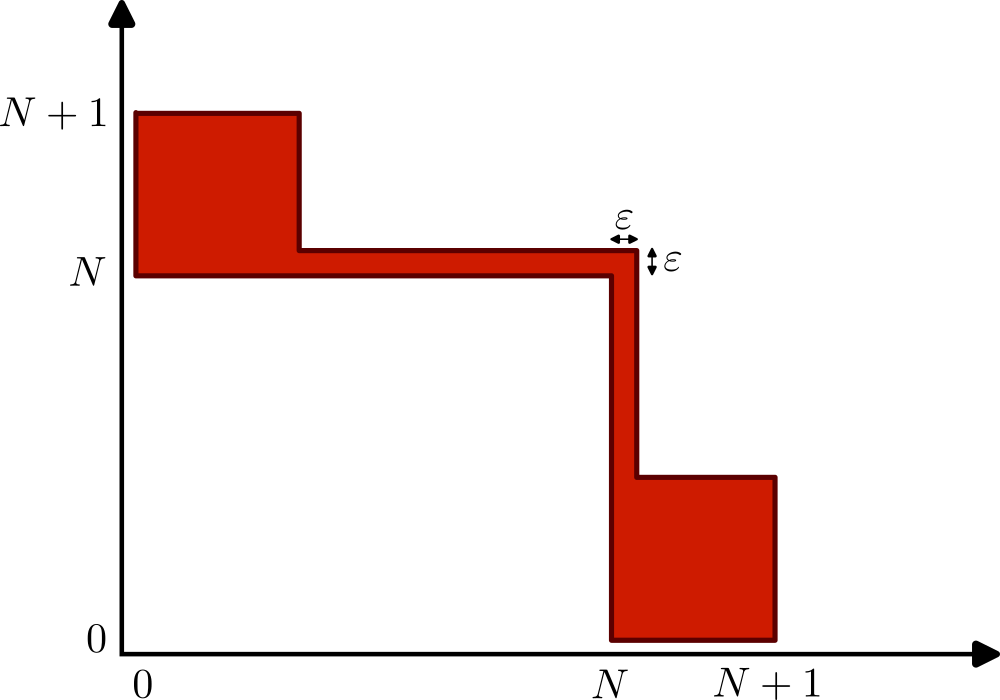}
    \includegraphics[width=.45\textwidth]{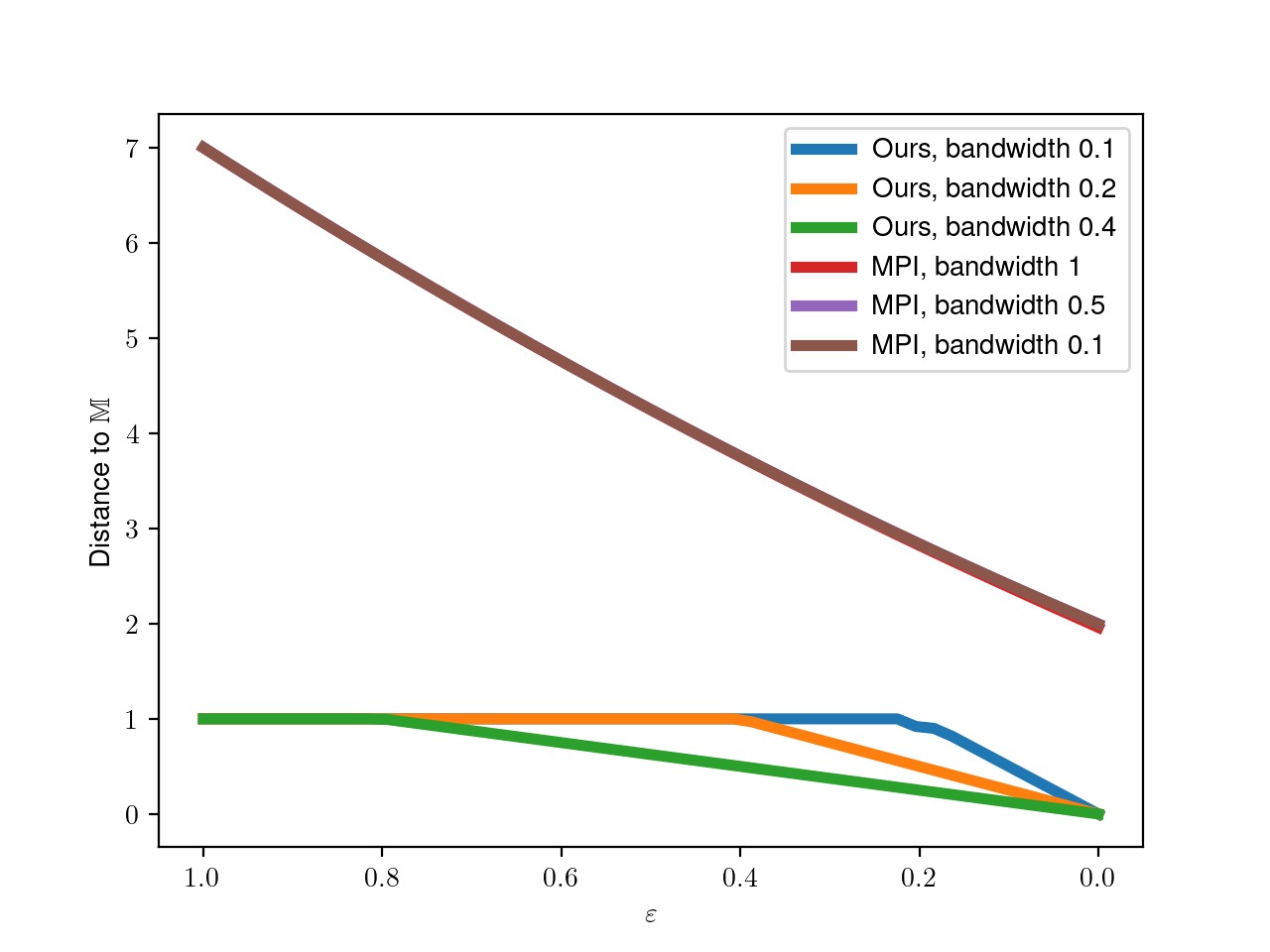}
    \caption{\rebuttal{Example of unstability of MPI.}}
    \label{fig:unstab_volume}
\end{figure}

\rebuttal{
We also show that this lack of stability can even prevent convergence. In Figure~\ref{fig:unstab_conv}, we repeated the setup of Section~\ref{sec:convergence} on a synthetic data set sampled from a noisy annulus in the Euclidean plane. As one can clearly see, convergence does not happen for MPI as the 
Euclidean distance 
to the limit MPI representation associated to the maximal number of subsample points suddenly drops to zero after an erratic behavior, while \stabreprname{}s exhibit a gradual decrease in agreement with Theorem~\ref{thm:stable}.
}
\begin{figure}[H]
    \centering
    \includegraphics[width=0.45\textwidth]{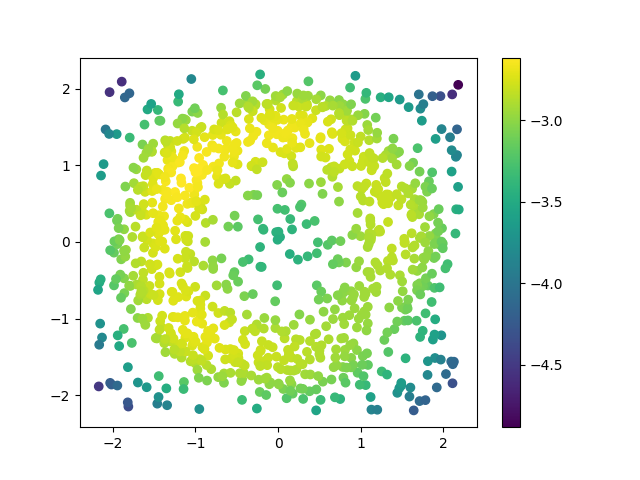}
    \includegraphics[width=0.45\textwidth]{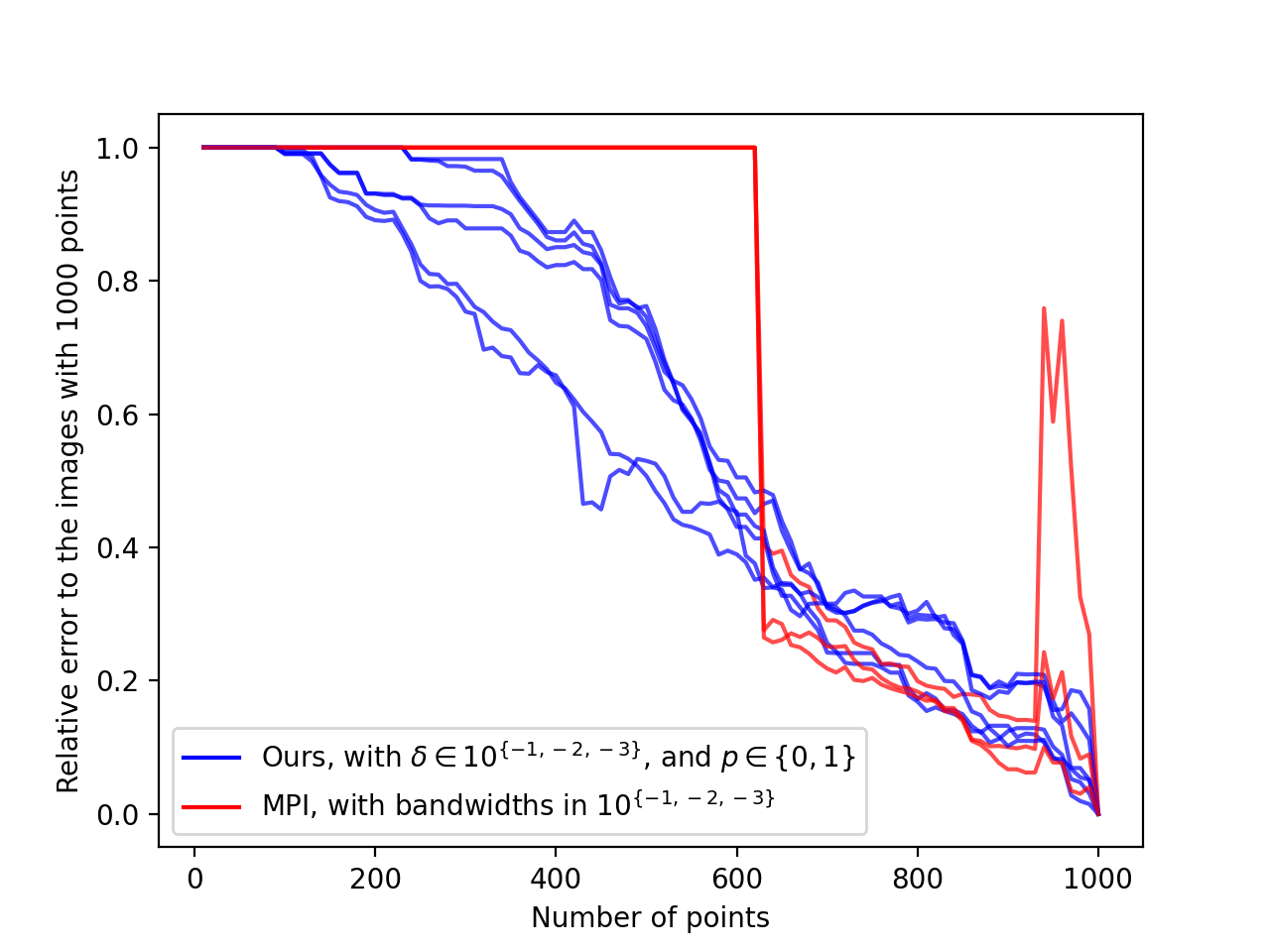}
    \caption{\rebuttal{Example of lack of convergence for MPI.}} 
    \label{fig:unstab_conv}
\end{figure}

\section{Simplicial homology}\label{app:simplicial_homology}

\rebuttal{
In this section, we recall the basics of simplicial homology with coefficients in $\mathbb{Z}/2\mathbb{Z}$, which we use for practical computations. A more detailed presentation can be found in~\citep[Chapter 1]{munkresElementsAlgebraicTopology1984}.
The basic bricks of simplicial (persistent) homology
are {\em simplicial complexes}, which are combinatorial models of topological spaces that can be stored and processed numerically.
}

\begin{definition}
\rebuttal{
Given a set of points $X_n:=\{x_1,\dots,x_n\}$ sampled in a topological space $X$,
an {\em abstract simplicial complex} built from $X_n$ is a family $S(X_n)$ of subsets of $X_n$ such that:
\begin{itemize}
    \item if $\tau\in S(X_n)$ and $\sigma\subseteq\tau$, then $\sigma\in S(X_n)$, and
    \item if $\sigma,\tau\in S(X_n)$, then either $\sigma\cap \tau\in S(X_n)$ or $\sigma\cap\tau=\varnothing$.
\end{itemize}
Each element $\sigma\in S(X_n)$ is called a {\em simplex} of $S(X_n)$, and the {\em dimension} of a simplex is defined as 
${\rm dim}(\sigma):={\rm card}(\sigma)-1$. Simplices of dimension $0$ are called {\em vertices}.
}
\end{definition}

\rebuttal{An important linear operator on simplices is the so-called {\em boundary operator}. Roughly speaking, it turns a simplex into the {\em chain} of its faces, where a chain is a formal sum of simplices. The set of chains has a group structure,
denoted by $Z_*(S(X_n))$.}

\begin{definition}
\rebuttal{
Given a simplex $\sigma:=[x_{i_1},\dots,x_{i_p}]$, the boundary operator $\partial$ is defined as $$\partial(\sigma):=\sum_{j=1}^p [x_{i_1},\dots,x_{i_{j-1}}, x_{i_{j+1}},\dots,x_{i_p}].$$ In other words, it is the chain constructed from $\sigma$ by removing one vertex at a time. This operator $\partial$ can then be extended straightforwardly to chains by linearity.
}
\end{definition}

\rebuttal{Given a simplicial complex, a topological feature is defined as a {\em cycle}, i.e., a chain such that each simplex in its boundary appears an even number of times. In order to formalize this property, we remove a simplex in a chain every time it appears twice, and we let a cycle be a chain $c$ s.t. $\partial(c)=0$.}

\rebuttal{
Now, one can easily check that $\partial\circ\partial(c)=0$ for any chain  $c$, i.e., the boundary of a cycle is always a cycle.
Hence, one wants to exclude cycles that are boundaries, since they correspond somehow to trivial cycles. Again, such boundaries form a group that is denoted by $B_*(S(X_n))$.
}

\begin{definition}
\rebuttal{
The homology group in dimension $k$ is the quotient group 
$$H_k(S(X_n)):=\frac{Z_k(S(X_n)}{B_k(S(X_n))}.$$
In other words, it is the group (one can actually show it is a vector space) of cycles made of simplices of dimension $k$ that are not boundaries.
}
\end{definition}

\rebuttal{See Figure~\ref{fig:illus_simp} for an illustration of these definitions. Finally, given a filtered simplicial complex (as defined in Section~\ref{sec:background}), computing its associated persistence barcode using the simplicial homology functor can be done with several softwares, such as, e.g., \gudhi~\citep{gudhi}.}

\begin{figure}[H]
    \centering
    \includegraphics{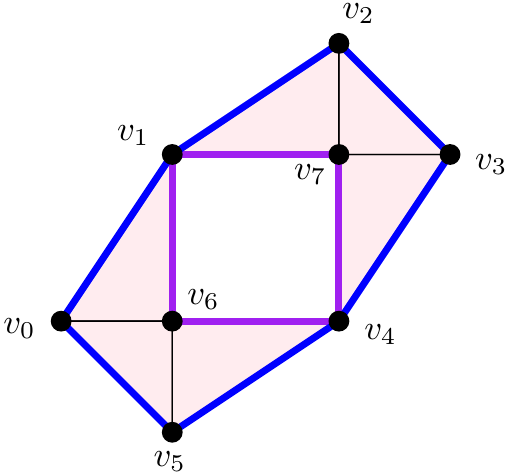}
    \caption{\rebuttal{Example of simplicial complex $S$ built from eight points, and made of eight vertices (simplices of dimension 0), fourteen edges (simplices of dimension 1) and six triangles (simplices of dimension 2). The purple path is a cycle; indeed $\partial([v_1,v_7] + [v_7,v_4] + [v_4,v_6] + [v_6,v_1])=[v_1] + [v_7] + [v_7] + [v_4] + [v_4] + [v_6] + [v_6] + [v_1] = 0$ since every vertex appears twice. Similarly, the blue path is a cycle as well. However, both paths represent the same topological feature, in the sense that they belong to the same equivalence class of $H_1(S)$, since their sum is exactly the boundary of the 2-chain comprised of the six triangles of the complex, i.e., they differ only by a trivial cycle. Hence, the dimension of $H_1(S)$ is 1.}}
    \label{fig:illus_simp}
\end{figure}


\section{Modules, interleaving and bottleneck distances}\label{app:distances}

\rebuttal{In this section, we provide a more formalized version of multiparameter persistence modules and their associated distances.
Strictly speaking, multiparameter persistence modules are nothing but a parametrized family of vector spaces obtained from applying the homology functor
to a multifiltration, as explained in Section~\ref{sec:background}.
}

\begin{definition}
\rebuttal{
    A {\em multiparameter persistence module} $\M$ is a family of vector spaces $\{M(\alpha):\alpha\in\R^n\}$, together with linear transformations, also called {\em transition maps}, 
    $\varphi_\alpha^\beta: \M(\alpha)\to\M(\beta)$ for any $\alpha\leq\beta$ (where $\leq$ denotes the partial order of $\R^n$), that satisfy
    $\varphi_\alpha^\gamma=\varphi_\beta^\gamma\circ\varphi_\alpha^\beta$ for any $\alpha\leq\beta\leq\gamma$.
    }
\end{definition}

\rebuttal{Of particular interest are {\em interval modules}, since they are easier to work with.}

\begin{definition}
\rebuttal{
    An {\em interval module} $\M$ is a multiparameter persistence module such that:
     \begin{itemize}
         \item its dimension is at most 1: ${\rm dim}(\M(\alpha)) \leq 1$ for any $\alpha\in\R^n$, and
        \item its support ${\rm supp}(\M) := \{\alpha \in \R^n : {\rm dim}(\M(\alpha)) = 1\}$ is an interval of $\R^n$,        
     \end{itemize}
where an interval of $\R^n$ is a subset of $I\subseteq\R^n$ that satisfy:
\begin{itemize}
        \item {\em (convexity)} if $p,q\in $ and $p\leq r \leq q$ then $r\in I$, and
        \item {\em (connectivity)} if $p,q \in I$, then there exists a finite sequence $r_1,r_2,\dots,r_m \in I,$ for some $ m\in \N$, 
        such that $p \sim r_1 \sim r_2 \sim \dots \sim r_m \sim q$, where $\sim$ can be either $\leq$ or $\geq$.
    \end{itemize}
    }
\end{definition}

\rebuttal{In the main body of this article, we study representations for {\em candidate decompositions} of modules, i.e., direct sums\footnote{
\rebuttal{Since multiparameter persistence modules are essentially families of vector spaces connected by transition maps, they admit direct sum decompositions, pretty much like usual vector spaces do. }} of interval modules 
that approximate the input modules.}


\rebuttal{Multiparameter persistence modules can be compared with the {\em interleaving distance}~\cite{lesnickTheoryInterleavingDistance2015}.}


\begin{definition}[Interleaving distance]\label{def:interleaving_distance}
  \rebuttal{ Given $ \varepsilon>0$, two multiparameter persistence 
modules $\M$ and $\M'$ are {\em $\boldsymbol\varepsilon$-interleaved} if there exist 
two morphisms $f \colon \M \to \M'_{\boldsymbol\varepsilon}$ and $g\colon \M' \to 
\M_{\boldsymbol \varepsilon}$ such that
   $
      g_{\cdot+\boldsymbol \varepsilon} \circ f_\cdot = \varphi_{\cdot}^{\cdot + 2 
\boldsymbol \varepsilon} \text{ and } f_{\cdot+\boldsymbol\varepsilon}\circ g_\cdot = 
\psi_{\cdot}^{\cdot + 2 \boldsymbol\varepsilon},
   $
	where $\M_{\boldsymbol\varepsilon}$ is the {\em shifted module} $\{\M(x+\boldsymbol\varepsilon)\}_{x\in\R^n}$, $\boldsymbol\varepsilon = ( \varepsilon, \dots, \varepsilon)\in\R^n$,
	and $\varphi$ and $ \psi$ are the transition maps of $\M$ and $\M'$ respectively.}
\rebuttal{
The {\em interleaving distance} between two multiparameter persistence modules $\M$ and $\M'$ is then defined as
    $
   \di (\M,\M') := \inf \left\{  \varepsilon \ge 0 : \M \text{ and } \M' 
   \text{ are } \boldsymbol\varepsilon\text{-interleaved}\right\}.
    $
%
}
\end{definition}


\rebuttal{
The main property of this distance is that it is {\em stable} for multi-filtrations that are 
obtained from the sublevel sets of functions. More precisely, given two continuous functions $f,g:S\to\R^n$ defined on a simplicial complex $S$, let $\M(f),\M(g)$ denote the multiparameter persistence modules obtained from the corresponding multifiltrations $\{S^f_x:=\{\sigma\in S : f(\sigma)\leq x\}\}_{x\in\R^n}$ and  $\{S_x^g:=\{\sigma\in S : g(\sigma)\leq x\}\}_{x\in\R^n}$. Then, one has~\cite[Theorem 5.3]{lesnickTheoryInterleavingDistance2015}:
\begin{equation}\label{eq:stability}
    \di(\M(f),\M(g))\leq \norm{f-g}_\infty.
\end{equation}
}
\rebuttal{
Another usual distance is the {\em bottleneck distance}~\cite[Section 2.3]{botnanAlgebraicStabilityZigzag2018}. 
Intuitively, it relies on decompositions of the modules into direct sums of indecomposable summands\footnote{\rebuttal{Recall that a  module $\M$ is {\em indecomposable.}  if
		$
		\M\cong A\oplus B  \Rightarrow \M \simeq A \text{ or } \M \simeq B
		$
	.}} (which are not necessarily intervals), and is defined as the largest interleaving
distance between summands that are matched under some matching. 
}







\begin{definition}[Bottleneck distance] \label{def:bottleneck_distance}
\rebuttal{Given two multisets $A$ and $B$, $ \mu \colon A\not\to B $ is called a {\em matching} if there exist $A' \subseteq A$ and $B'\subseteq B$ such that $\mu\colon A'\to B'$ is a bijection. The
subset $A':=\mathrm{ coim}( \mu)$ (resp. $B':= \mathrm{ im}(\mu)$) is called the {\em coimage} (resp. {\em image}) of $\mu$.}

\rebuttal{Let $\M \cong \bigoplus_{i\in \mathcal{I}}M_i$ and $\M' \cong \bigoplus_{j\in \mathcal{J}}M'_j$ be two multiparameter persistence modules. 
Given $\varepsilon\geq 0$, the modules $\M$ and $\M'$ are $\boldsymbol\varepsilon$-\emph{matched} if there exists a matching $\mu \colon \mathcal{I}\not\to \mathcal{J}$ such that
$M_i$ and $M'_{\mu(i)}$ are $ \boldsymbol\varepsilon$-interleaved for all $i\in\mathrm{ coim}(\mu)$, and $M_i$ (resp. $M'_j$) is $\boldsymbol\varepsilon$-interleaved with the null module {\bf 0} for all $i\in\mathcal{I}\backslash \mathrm{ coim}(\mu)$ (resp. $j\in \mathcal{J}\backslash \mathrm{ im}(\mu)$).
}

\rebuttal{
	The {\em bottleneck distance} 
   between two multiparameter persistence modules $\M$ and $\M'$ is then defined as
 	    $
		\db(\M,\M') := \inf \left\{ \varepsilon\ge 0 \,:\, \M \text{ and } \M' \text{ are  }\boldsymbol \varepsilon \text{-matched}\right\}.
 	    $
 }
\end{definition}

\rebuttal{
Since a matching between the decompositions of two multiparameter persistence modules induces an interleaving between
the modules themselves, it follows that $\di\leq\db$.
Note also that $\db$ can actually be arbitrarily larger than $\di$, as showcased in~\cite[Section 9]{botnanAlgebraicStabilityZigzag2018}.
}


\section{The fibered barcode and its properties}\label{app:back}

\begin{definition}\label{def:fibbarcode}
Let $n\in\N^*$ and $F = \{F^{(1)},\dots, F^{(n)}\}$ be a multifiltration on a topological space $X$.
Let $\e,b\in\R^n$, 
and $\ell_{\e,b}:\R\rightarrow \R^n$ be the line in $\R^n$ defined with $\ell_{\e,b}(t)=t\cdot \e + b$, that is, $\ell_{\e,b}$ is the line of direction $\e$ passing through $b$. Let $F_{\e,b}:\R\rightarrow \mathcal P(X)$ defined with $F_{\e,b}(t)=\bigcap_{i=1}^n F^{(i)}([\ell_{\e,b}(t)]_i)$, where $[\cdot]_i$ denotes the $i$-th coordinate. 
Then, each $F_{\e,b}$ is a single-parameter filtration and has a corresponding persistence barcode $B_{\e,b}$. The set $\B(F)=\{B_{\e,b}\,:\,\e,b\in\R^n\}$ is called the {\em fibered barcode} of $F$.
\end{definition}

The two following lemmas from \citep{landiRankInvariantStability2018} describe two useful properties of the fibered barcode.

\begin{lemma}[Lemma 1 in~\citep{landiRankInvariantStability2018}]\label{lem:landi1}
Let $\e,b\in\R^n$ and $\ell_{\e,b}$ be the corresponding line. Let 
$\hat e = {\rm min}_i [\e]_i$. 
Let $F,F'$ be two multi-filtrations, $\M,\M'$ be the corresponding persistence modules and $B_{\e,b}\in\B(F)$ and  $B'_{\e,b}\in\B(F')$ be the corresponding barcodes in the fibered barcodes of $F$ and $F'$.
Then, the following stability property holds:
\begin{equation}
    \db(B_{\e,b}, B'_{\e,b}) \leq \frac{ \di(\M,\M')}{\hat e}.
\end{equation}
\end{lemma}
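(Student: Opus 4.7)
The plan is to restrict a multiparameter $\varepsilon$-interleaving between $\M$ and $\M'$ to the line $\ell_{\e,b}$, producing a single-parameter $(\varepsilon/\hat e)$-interleaving between the restricted modules $\M_\ell := \M\circ\ell_{\e,b}$ and $\M'_\ell := \M'\circ\ell_{\e,b}$, and then to invoke the classical algebraic stability theorem for one-parameter persistence to convert this interleaving bound into the claimed bottleneck bound between $B_{\e,b}$ and $B'_{\e,b}$.

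First I would fix $\varepsilon > \di(\M,\M')$ and select $\varepsilon$-interleaving morphisms $f \colon \M \to \M'_{\boldsymbol\varepsilon}$ and $g \colon \M' \to \M_{\boldsymbol\varepsilon}$ from Definition~\ref{def:interleaving_distance}, where $\boldsymbol\varepsilon = (\varepsilon,\ldots,\varepsilon) \in \R^n$. The key quantitative observation is that for every $t \in \R$ one has the coordinate-wise inequality $t\e + b + \boldsymbol\varepsilon \leq (t + \varepsilon/\hat e)\e + b$, since $[\e]_i \geq \hat e$ implies $\varepsilon \leq (\varepsilon/\hat e)\,[\e]_i$ for each $i$. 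This lets me define $\tilde f_t \colon \M_\ell(t) \to \M'_\ell(t+\varepsilon/\hat e)$ by postcomposing $f_{t\e+b}$ with the internal transition map of $\M'$ from $t\e+b+\boldsymbol\varepsilon$ to $(t+\varepsilon/\hat e)\e+b$, and symmetrically $\tilde g_t \colon \M'_\ell(t) \to \M_\ell(t+\varepsilon/\hat e)$ using a transition map of $\M$.

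Next I would verify that $(\tilde f, \tilde g)$ form an $(\varepsilon/\hat e)$-interleaving of the single-parameter modules $\M_\ell$ and $\M'_\ell$. A short diagram chase combining the naturality of $f$ and $g$ with respect to the multiparameter transition maps and the original interleaving identities $g_{\cdot+\boldsymbol\varepsilon}\circ f_\cdot = \varphi_\cdot^{\cdot+2\boldsymbol\varepsilon}$ and $f_{\cdot+\boldsymbol\varepsilon}\circ g_\cdot = \psi_\cdot^{\cdot+2\boldsymbol\varepsilon}$ reduces the two compositions $\tilde g_{t+\varepsilon/\hat e}\circ \tilde f_t$ and $\tilde f_{t+\varepsilon/\hat e}\circ \tilde g_t$ to the restricted transition maps over shift $2\varepsilon/\hat e$, which is exactly what the single-parameter interleaving definition requires. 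At that point the one-parameter algebraic stability theorem~\citep{Chazal2016} immediately gives $\db(B_{\e,b}, B'_{\e,b}) \leq \di(\M_\ell, \M'_\ell) \leq \varepsilon/\hat e$, and taking the infimum over $\varepsilon > \di(\M,\M')$ yields the stated inequality.

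The main obstacle I expect is the bookkeeping in the middle step: because $\boldsymbol\varepsilon$ is not parallel to $\e$, one must carefully absorb the off-line slack through internal transition maps before checking that the resulting triangles commute. Once the naturality squares of $f$ and $g$ are written out and the transition maps are composed using functoriality ($\varphi_a^c = \varphi_b^c \circ \varphi_a^b$), the argument collapses cleanly and the remaining invocation of one-parameter algebraic stability is routine.
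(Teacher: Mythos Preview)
Your argument is correct and is precisely the standard proof of this result. Note, however, that the paper does not actually supply its own proof of this lemma: it is quoted verbatim as Lemma~1 of~\citep{landiRankInvariantStability2018} and used as a black box, so there is nothing in the paper to compare your proposal against beyond the citation itself. Your restriction-plus-absorption argument (using $\ell_{\e,b}(t)+\boldsymbol\varepsilon \le \ell_{\e,b}(t+\varepsilon/\hat e)$ to push the multiparameter interleaving maps back onto the line, then invoking one-parameter algebraic stability) is exactly the proof given in the cited reference, so your proposal and the source the paper relies on coincide.
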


\begin{lemma}[Lemma 2 in~\citep{landiRankInvariantStability2018}]\label{lem:landi2}
Let $\e,\e',b,b'\in\R^n$ and $\ell_{\e,b}$, $\ell_{\e',b'}$ be the corresponding lines. Let 
$\hat e = {\rm min}_i [\e]_i$ and $\hat e' = {\rm min}_i [\e']_i$.
Let $F$ be a multi-filtration, $\M$ be the corresponding persistence module and $B_{\e,b},B_{\e',b'}\in\B(F)$ be the corresponding barcodes in the fibered barcode of $F$. Assume $\M$ is decomposable $\M=\oplus_{i=1}^m M_i$, and let $K>0$ such that 
$M_i\subseteq  B_\infty(0,K):=\{x\in\R^n\,:\,\norm{x}_\infty\leq K\}$
for all $i\in \llbracket 1,m\rrbracket$. Then, the following stability property holds:
\begin{equation}
    \db(B_{\e,b}, B_{\e',b'}) \leq \frac{(K + \max \{ \norm{b}_\infty, \norm{b'}_\infty \}) \cdot \norm{\e-\e'}_\infty + \norm{b-b'}_\infty }{\hat e \cdot \hat e'}.
\end{equation}
\end{lemma}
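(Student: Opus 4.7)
The strategy is to reduce the claim to a one-parameter interleaving statement by viewing each fibered barcode as the barcode of a one-parameter restriction of $\M$ to the corresponding line. Concretely, define the one-parameter persistence modules $\M_{\e,b}(t) := \M(t\e + b)$ and $\M_{\e',b'}(t) := \M(t\e' + b')$, with transition maps inherited from those of $\M$. By Definition~\ref{def:fibbarcode}, their barcodes are $B_{\e,b}$ and $B_{\e',b'}$, so the one-parameter isometry theorem (which identifies bottleneck distance between barcodes with interleaving distance between the underlying modules) yields $\db(B_{\e,b}, B_{\e',b'}) = \di(\M_{\e,b}, \M_{\e',b'})$. It therefore suffices to exhibit an $\varepsilon$-interleaving between these two one-parameter modules for $\varepsilon$ equal to the claimed right-hand side.

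For the interleaving, I would take the natural candidate maps $f_t \colon \M_{\e,b}(t) \to \M_{\e',b'}(t+\varepsilon)$ and $g_t \colon \M_{\e',b'}(t) \to \M_{\e,b}(t+\varepsilon)$ induced by the transition maps of $\M$ itself, which are defined provided the coordinatewise partial-order inequalities $t\e + b \leq (t+\varepsilon)\e' + b'$ and $t\e' + b' \leq (t+\varepsilon)\e + b$ hold in $\R^n$. Once these maps exist, both interleaving identities hold automatically by functoriality of $\M$: each composition $g_{t+\varepsilon}\circ f_t$ (resp.\ $f_{t+\varepsilon}\circ g_t$) factors through an intermediate point of $\R^n$ and collapses to the transition map of the underlying one-parameter module shifted by $2\varepsilon$.

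The main technical step is to determine how large $\varepsilon$ must be for the coordinatewise inequalities to hold on the relevant range of $t$. Expanding the first inequality componentwise gives $\varepsilon [\e']_i \geq t([\e]_i - [\e']_i) + ([b]_i - [b']_i)$ for each $i$, which is ensured by $\varepsilon \geq (|t|\,\|\e - \e'\|_\infty + \|b - b'\|_\infty)/\hat e'$. Crucially, we need this only at values of $t$ where $\M_{\e,b}(t)$ or $\M_{\e',b'}(t+\varepsilon)$ is nonzero; elsewhere the interleaving diagram is trivially commutative. The decomposability hypothesis $\M = \bigoplus_{i=1}^m M_i$ with $\supp{M_i} \subseteq B_\infty(0,K)$ confines such $t$ to a bounded range: if $\ell_{\e,b}(t) \in B_\infty(0,K)$, then $|t|\cdot [\e]_i \leq K + |[b]_i|$ for each $i$, hence $|t| \leq (K + \|b\|_\infty)/\hat e$. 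Plugging in, the constraint $\varepsilon \geq (K + \|b\|_\infty)\|\e-\e'\|_\infty/(\hat e\,\hat e') + \|b-b'\|_\infty/\hat e'$ suffices for $f$.

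The symmetric argument for $g$ interchanges the roles of $(\e, b)$ and $(\e', b')$, replacing $\|b\|_\infty$ by $\|b'\|_\infty$ and $\hat e'$ by $\hat e$; taking the maximum of the two one-sided constraints produces $\max(\|b\|_\infty, \|b'\|_\infty)$ in the numerator and collects the two denominators into the single symmetric denominator $\hat e \cdot \hat e'$ (valid under the standard normalization $\hat e, \hat e' \leq 1$, which gives $1/\hat e, 1/\hat e' \leq 1/(\hat e\,\hat e')$), producing the stated bound. The main obstacle is the careful bookkeeping of the bidirectional bounds together with the proper use of the decomposition hypothesis to control the relevant $t$-range; conceptually, the argument amounts to a one-parameter interleaving whose shift is tuned by the affine geometry of two lines in $\R^n$, and the support bound $K$ is what prevents $|t|$ from being unbounded and hence the shift $\varepsilon$ from blowing up.
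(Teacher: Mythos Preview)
The paper does not supply its own proof of this lemma: it is stated in Appendix~\ref{app:back} purely as a citation of \citep[Lemma~2]{landiRankInvariantStability2018}, and is then invoked as a black box in the proof of Proposition~1 in Appendix~\ref{app:add}. There is therefore no internal argument to compare against.

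That said, your proposal recovers precisely the argument of the cited reference: restrict $\M$ to each line to obtain one-parameter modules, invoke the isometry theorem to replace $\db$ by $\di$, and build an $\varepsilon$-interleaving from the internal transition maps of $\M$, with $\varepsilon$ chosen so that the partial-order constraint $t\e+b\le(t+\varepsilon)\e'+b'$ holds on the relevant range of parameters. The use of the decomposition/support hypothesis to cap $|t|$ is exactly the mechanism Landi uses, and your observation that the normalization $\hat e,\hat e'\le 1$ is needed to absorb the two one-sided denominators into the symmetric $\hat e\,\hat e'$ is correct and implicit in the original. One point worth tightening: you assert that the interleaving squares are ``trivially commutative'' at parameters where $\M_{\e,b}(t)=0$, but naturality of $f$ as a module morphism also requires the mixed squares (with $\M_{\e,b}(s)\neq 0$ and $\M_{\e,b}(t)=0$ for $s<t$) to commute; the clean way to handle this is to choose $\varepsilon$ so that the coordinatewise inequality holds for all $t$ with $t\e+b\in B_\infty(0,K)$ (not merely $t\e+b\in\supp\M$), which your bound on $|t|$ already guarantees, so that the transition map of $\M$ through $t\e+b$ is always available and factors the composite through zero. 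This is a bookkeeping refinement rather than a gap.
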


\section{Proof of Theorem~\ref{thm:stable}}\label{app:proof}

Our proof is based on several lemmas. In the first one, we focus on the \stabreprname{} weight function $w$ 
as defined in Definition~\ref{def:weight}. 

\begin{lemma}\label{lemma:basic_image_stability}
	Let $M$ and $M'$ be two interval modules with compact support. Then, one has  
	\begin{equation}\label{equation:computation}
		\di(M,0) = \frac{1}{2} \sup_{b,d \in \mathrm{ supp}(M)}\min_j  (d_j-b_j)_+=w(M).
	\end{equation}
	Furthermore, one has the equality
	\begin{equation}\label{equation:stability}
		|w(M) - w(M')| \le \di(M,M').
	\end{equation}
\end{lemma}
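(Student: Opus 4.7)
The plan is to prove the two displayed equalities in turn, then derive the stability inequality from a standard interleaving bookkeeping argument.

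For (\ref{equation:computation}), I will first unpack $\di(M,0)$ using Definition~\ref{def:interleaving_distance}. An $\varepsilon$-interleaving between $M$ and the zero module forces both compositions $g_{x+\boldsymbol\varepsilon}\circ f_x$ and $f_{x+\boldsymbol\varepsilon}\circ g_x$ to vanish, and by the interleaving axiom these compositions equal the transition maps $\varphi_x^{x+2\boldsymbol\varepsilon}$ and $\psi_x^{x+2\boldsymbol\varepsilon}$. Since $M$ is an interval module, $\varphi_x^{x+2\boldsymbol\varepsilon}$ is nonzero iff both $x$ and $x+2\boldsymbol\varepsilon$ lie in $\supp{M}$. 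Thus $\di(M,0)$ equals the supremum of $\varepsilon$ such that some pair $(b,d)=(x,x+2\boldsymbol\varepsilon)$ lies in the support, which is exactly $\tfrac12\sup_{b,d\in\supp{M}}\min_j(d_j-b_j)_+$. For the second equality, I invoke convexity of the interval support: any $b\leq d$ in $\supp{M}$ with $a:=\tfrac12\min_j(d_j-b_j)\geq 0$ yields the diagonal segment $\{b+t\mathbf{1}:0\leq t\leq 2a\}\subseteq[b,d]\subseteq\supp{M}$, showing $w(M)\geq a$. Conversely, every diagonal segment $\ell_{y,\varepsilon}\subseteq\supp{M}$ realizes the pair $(y-\varepsilon\mathbf{1},y+\varepsilon\mathbf{1})$ whose $\min_j$-gap is $2\varepsilon$, so the two quantities coincide.

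For (\ref{equation:stability}), I fix $\varepsilon>\di(M,M')$ and a corresponding $\varepsilon$-interleaving $(f,g)$. Given any diagonal pair $b\leq d$ in $\supp{M}$ with $d-b=2a\mathbf{1}$ and $a>\varepsilon$, convexity of $\supp{M}$ places $b+2\boldsymbol\varepsilon$ and $d-2\boldsymbol\varepsilon$ inside $[b,d]\subseteq\supp{M}$, so the transition maps $\varphi^{b+2\boldsymbol\varepsilon}_b$ and $\varphi^d_{d-2\boldsymbol\varepsilon}$ are nonzero. The interleaving identities $g_{b+\boldsymbol\varepsilon}\circ f_b=\varphi^{b+2\boldsymbol\varepsilon}_b$ and $g_{d-\boldsymbol\varepsilon}\circ f_{d-2\boldsymbol\varepsilon}=\varphi^d_{d-2\boldsymbol\varepsilon}$ then force $f_b$ and $g_{d-\boldsymbol\varepsilon}$ to be nonzero, so $b+\boldsymbol\varepsilon,\ d-\boldsymbol\varepsilon\in\supp{M'}$. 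By convexity of $\supp{M'}$, the entire diagonal segment from $b+\boldsymbol\varepsilon$ to $d-\boldsymbol\varepsilon$ lies in $\supp{M'}$, yielding $w(M')\geq a-\varepsilon$. Letting $a\uparrow w(M)$ gives $w(M')\geq w(M)-\varepsilon$; symmetry and $\varepsilon\downarrow\di(M,M')$ close the argument.

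The main obstacle is purely bookkeeping: keeping track of which direction of the interleaving maps each shifted point inside the correct support, and invoking convexity of interval supports at the right moments so that every intermediate point $b+2\boldsymbol\varepsilon$, $d-2\boldsymbol\varepsilon$, $b+\boldsymbol\varepsilon$, $d-\boldsymbol\varepsilon$ lands where required. Compactness of the supports enters only to ensure the suprema defining $w$ are attained (or at least finite), so no additional analytic subtlety arises.
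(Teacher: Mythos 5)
Your proof is correct. For Equation~(\ref{equation:computation}) it follows essentially the same route as the paper: both arguments characterize an $\varepsilon$-interleaving with the zero module by the vanishing of all transition maps $\varphi_x^{x+2\boldsymbol\varepsilon}$, identify nonvanishing of such a map with the existence of a diagonal pair $x, x+2\boldsymbol\varepsilon$ in $\supp{M}$, and use convexity of the interval support to pass between diagonal pairs and arbitrary pairs $b\le d$ (the paper phrases this as "transforming $d_n$ so that $b_n,d_n$ lie on a common diagonal line" and handles attainment of the supremum via a compact exhaustion, which matches your limiting argument). The only genuine divergence is in Equation~(\ref{equation:stability}): the paper disposes of it in one line, noting that once $w(M)=\di(M,0)$ is established, $|w(M)-w(M')|=|\di(M,0)-\di(M',0)|\le\di(M,M')$ is just the triangle inequality for $\di$. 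You instead chase an explicit $\varepsilon$-interleaving $(f,g)$ through the two supports, using the interleaving identities to place $b+\boldsymbol\varepsilon$ and $d-\boldsymbol\varepsilon$ in $\supp{M'}$ and convexity to extract a diagonal segment of half-length $a-\varepsilon$; this is correct but in effect re-proves the needed instance of the triangle inequality, so it costs extra bookkeeping and buys only self-containedness. Given that you already proved $w(\cdot)=\di(\cdot,0)$, you could shorten your argument to the paper's one-liner.
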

\begin{proof}

	We first show Equation~(\ref{equation:computation}) with two inequalities.

	{\bf First inequality: $\le$} Let $M$ be an interval module. If $\di(M,0)=0$, then the inequality is trivial, so
	we now assume that $\di(M,0) > 0$. Let $\delta > 0$ such that $\delta < \di(M,0)$. 
	By definition of $\di$,  
	the identity morphism $M \to M_{2 \boldsymbol\delta}$ cannot be factorized by $0$. 
	This implies the existence of some $b\in\R^n$ such that $ \mathrm{ rank}(M(b)\to M(b + 2 \boldsymbol\delta))>0$; in particular, $b,b+2\boldsymbol\delta \in \supp{M}$. Making $\delta$ converge to $\di(M,0)$ yields the
	desired inequality.
	
	{\bf Second inequality: $\ge$}
	Let $(K_n)_{n\in \N}$ be a compact interval exhaustion of $ \supp{M}$, and $b_n,d_n \in K_n$ be two points that achieve the maximum in 
	\begin{equation*}
		\frac{1}{2}\sup_{b,d \in K_n}\min_j  (d_j-b_j)_+.
	\end{equation*}
	Now, by functoriality of persistence modules, we can assume without loss of generality
 	that $b_n$ and $d_n$ are on the same diagonal line (indeed, if they are not, it is possible to transform $d_n$ into $\tilde d_n$ such that $b_n$ and $\tilde d_n$ are on the same diagonal line and also achieve the supremum). Thus, 
 	$ \mathrm{rank}(M(b_n) \to M(d_n))>0$, and $\di(M,0) \ge \frac{1}{2}\norm{ d_n - b_n}_\infty$. Taking the limit over $n \in \N$ leads to the desired inequality.
 	
 	Inequality~(\ref{equation:stability}) follows directly from the triangle inequality applied on $\di$.

\end{proof}

In the following lemma, we rewrite volumes of interval module supports using interleaving distances.
\begin{lemma}\label{lemma:volume_formula}
	Let $M$ be an interval module, and $R\subseteq \R^n$ be a compact rectangle, with $n\ge 2$. Then, one has:
	\begin{equation*}
        {\rm vol}\left(\supp M \cap R \right) = 2\int_{ \left\{ y\in\R^n : y_n = 0\right\}}
		\di\left(\restr M {l_y \cap R},0\right) \dd \lambda^{n-1}(y)
	\end{equation*}
	where $l_y$ is the diagonal line crossing $y$, and $\lambda^{n-1}$ denotes the Lebesgue measure in $\R^{n-1}$.
\end{lemma}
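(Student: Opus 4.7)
The plan is to compute the volume in $\R^n$ by slicing along diagonal lines, using the natural parametrization $\R^n \cong \{y_n = 0\} \times \R$ given by $(y, t) \mapsto y + t \cdot [1,\dots,1]$, and then use Lemma~\ref{lemma:basic_image_stability} on each slice to rewrite the length as an interleaving distance. The main obstacle will be verifying that each slice $l_y \cap \supp M \cap R$ is a single segment (so its 1-dimensional Lebesgue measure is its diameter), which requires invoking the interval structure of $\supp M$.

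First, I would observe that the map $\Phi \colon (y, t) \mapsto y + t \cdot [1,\dots,1]$, defined on $\{y\in\R^n : y_n = 0\} \times \R$, is a linear isomorphism onto $\R^n$ whose Jacobian matrix is a shear with determinant $1$. Hence $\Phi$ is volume-preserving. Applying Fubini--Tonelli to the pushed-forward Lebesgue measure gives
\begin{equation*}
    \mathrm{vol}(\supp M \cap R)
    = \int_{\{y_n = 0\}} \lambda^1\!\left(\{t\in \R : y + t\cdot[1,\dots,1] \in \supp M \cap R\}\right) \dd\lambda^{n-1}(y).
\end{equation*}

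Next, I would argue that, for each $y$ in the hyperplane $\{y_n = 0\}$, the set $l_y \cap \supp M \cap R$ is either empty or a single closed segment in $l_y$. Indeed, $R$ is a product of intervals, and any two points on $l_y$ are comparable for the partial order of $\R^n$; so both $R$ and $\supp M$ (an interval in the sense defined in Appendix~\ref{app:distances}, hence convex with respect to the partial order) are ``diagonally convex,'' meaning that along $l_y$ their intersection with $l_y$ is an interval. The 1-dimensional Lebesgue measure of this segment is exactly its length in the parameter $t$, which coincides with $\sup_{b,d \in l_y \cap \supp M \cap R}(d_j - b_j)_+$ for any fixed coordinate $j$ (since $d - b$ is a positive multiple of $[1,\dots,1]$).

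Finally, I would apply Lemma~\ref{lemma:basic_image_stability} to the interval submodule $\restr{M}{l_y \cap R}$, whose support is precisely $l_y \cap \supp M \cap R$. Equation~(\ref{equation:computation}) yields
\begin{equation*}
    \di(\restr{M}{l_y \cap R}, 0)
    = \tfrac{1}{2} \sup_{b,d \in l_y \cap \supp M \cap R} \min_j (d_j - b_j)_+
    = \tfrac{1}{2}\, \lambda^1\!\left(\{t : y + t\cdot[1,\dots,1] \in \supp M \cap R\}\right).
\end{equation*}
Substituting this identity into the Fubini expression above gives the stated formula with the factor of $2$. The only subtle point is the diagonal-convexity argument on each slice; everything else is a direct combination of a volume-preserving change of variables and the $w = \di(\cdot, 0)$ identity established in Lemma~\ref{lemma:basic_image_stability}.
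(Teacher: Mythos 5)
Your proof is correct and follows essentially the same route as the paper: the unit-Jacobian shear $(y,t)\mapsto y+t\cdot[1,\dots,1]$ with Fubini, followed by identifying the length of each diagonal slice with $2\,\di(\restr{M}{l_y\cap R},0)$ via Lemma~\ref{lemma:basic_image_stability}. The only difference is that you spell out the diagonal-convexity of $\supp M\cap R$ along $l_y$, which the paper leaves implicit.
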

\begin{proof}
	Using the change of variables $y_i = x_i-x_n$ and $t = x_n$ (which has a trivial Jacobian) yields the following inequalities: 
	\begin{align*}
		{\rm vol}(\supp M\cap R)
		&= 
		\int_{\supp M\cap R} \dd \lambda^n(x)
		\\
		&= 
		\int_{\left\{y\in\R^n:y_n =0 \right\}} 
		\int_{t \in \mathbb R} \mathds{1}_{\supp M\cap R}(y+\boldsymbol t)\dd t \dd \lambda^{n-1}(y)
		\\
		&=
		\int_{ \left\{ y\in\R^n:y_n=0\right\}}
		\mathrm {diam}_{\norm{\cdot}_\infty}
        \left({\supp M\cap l_y\cap R}\right)
        \dd \lambda^{n-1}(y)
	\end{align*}
	where $l_y$ is the diagonal line passing through $y$. 
	Now, since $M$ is an interval module,
	one has
	$\mathrm {diam}_{\norm{\cdot}_\infty}
        \left({\supp M\cap l_y\cap R}\right) = 2 \di(\restr M {l_y\cap R}, 0)$, which concludes the proof.

\end{proof}

In the following proposition, we provide stability bounds for single interval modules.
\begin{proposition}
	\label{prop:local_weights_ppties}
	If $M$ and $M'$ are two interval modules, then for any $\delta>0$ and \stabreprname{} parameter $\phi_\delta$ in Definition~\ref{def:weight}, one has:
	\begin{enumerate}
		\item $0 \le \phi_\delta(M)(x) \le \frac {w(M)} \delta \wedge 1$, for any $x\in\R^n$,
		\item $\norm{\phi_\delta(M) - \phi_\delta(M')}_\infty \le 2(\di(M,M')\wedge \delta) / \delta$. 
	\end{enumerate}
\end{proposition}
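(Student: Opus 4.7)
The plan is to combine Lemmas~\ref{lemma:basic_image_stability} and~\ref{lemma:volume_formula}, which respectively identify $w(M)$ with $\di(M,0)$ and express volumes of interval supports as integrals of diagonal interleaving distances, and then to handle each of the three choices (a), (b), (c) of $\phi_\delta$ in Definition~\ref{def:weight} in turn. Non-negativity of $\phi_\delta(M)(x)$ will be immediate from each definition.

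For the upper bound $\phi_\delta(M)(x) \le w(M)/\delta \wedge 1$ in Part~1, I would argue: in case (a), any diagonal segment inside the cube $R_{x,\bdelta}$ has length at most $2\delta$, and by monotonicity of $w$ under inclusion $\supp{M}\cap R_{x,\bdelta}\subseteq\supp{M}$ I will obtain both bounds; in case (c), any sub-cube of half-side $\delta'$ inside $\supp{M}\cap R_{x,\bdelta}$ satisfies $\delta'\le\delta$ (trivially) and $\delta'\le w(M)$ (since its diagonal of length $2\delta'$ fits inside $\supp{M}$), from which the bound will follow using $(t\wedge 1)^n\le t\wedge 1$ for $t\ge 0$; in case (b), the bound $\le 1$ is the trivial volume comparison with $R_{x,\bdelta}$, and applying Lemma~\ref{lemma:volume_formula} to $\supp{M}\cap R_{x,\bdelta}$ and bounding each integrand $\di(\restr{M}{l_y\cap R_{x,\bdelta}},0)$ by $w(M)$ via Lemma~\ref{lemma:basic_image_stability} will yield the remaining bound, modulo an elementary measure computation on the admissible region of $y$.

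For the stability bound in Part~2, the regime $\di(M,M')\ge\delta$ is immediate from Part~1, since both quantities lie in $[0,1]$ so their difference is at most $2$. Otherwise I will set $\varepsilon := \di(M,M') < \delta$ and observe that by functoriality the restrictions $\restr{M}{R_{x,\bdelta}}$ and $\restr{M'}{R_{x,\bdelta}}$ remain $\varepsilon$-interleaved. For (a), Lemma~\ref{lemma:basic_image_stability} applied to these restrictions gives $|\phi_\delta(M)(x)-\phi_\delta(M')(x)|\le\varepsilon/\delta$ directly. For (c), an $\varepsilon$-interleaving will let me shrink any cube of half-side $\delta'$ inside $\supp{M}\cap R_{x,\bdelta}$ by $\varepsilon$ on each face to obtain a cube of half-side $\delta'-\varepsilon$ inside $\supp{M'}\cap R_{x,\bdelta}$; combined with the symmetric statement this yields $|\delta_\ast(M)-\delta_\ast(M')|\le\varepsilon$ for the largest-inscribed-cube half-side $\delta_\ast$, and the bound on the difference of $n$th powers follows since both normalized side-lengths lie in $[0,1]$. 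For (b), I will apply Lemma~\ref{lemma:volume_formula} to both modules and bound the pointwise integrand difference by $\min(\varepsilon, L(y)/2)$, where $L(y)$ is the diagonal length of $l_y\cap R_{x,\bdelta}$, using the Lipschitz stability of $\di(\cdot,0)$ from Lemma~\ref{lemma:basic_image_stability} on one hand and the trivial bound $\di(\cdot,0)\le L(y)/2$ on the other.

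The hard part will be the measure-theoretic integration in case (b): the admissible set $\{y\in\R^{n-1}\times\{0\}:l_y\cap R_{x,\bdelta}\ne\varnothing\}$ is a region on which $L(y)$ tapers linearly to zero at the boundary, and I will need to integrate $\min(\varepsilon, L(y)/2)$ against this profile and compare to $(2\delta)^n$ to recover a bound of order $\varepsilon/\delta$. The other cases reduce cleanly to the two cited lemmas together with the functoriality of restriction for interval modules, so apart from the volume integration the proof is largely bookkeeping.
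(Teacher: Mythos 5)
Your plan follows the paper's proof quite closely in structure: Part 1 from the identity $w(M)=\di(M,0)$ of Lemma~\ref{lemma:basic_image_stability}, Part 2(a) by restricting the interleaving to $R_{x,\bdelta}$ and applying the triangle inequality to $\di(\cdot,0)$, and Part 2(b) by Lemma~\ref{lemma:volume_formula} together with a pointwise-in-$y$ bound on $\di\left(\restr{M}{l_y\cap R_{x,\bdelta}},\restr{M'}{l_y\cap R_{x,\bdelta}}\right)$ by both $\di(M,M')$ (via Lemma~\ref{lem:landi1}, or restriction of the interleaving) and $\delta$, then integrating; this is exactly the paper's chain of three inequalities for (b). The integration you defer as ``the hard part'' is precisely the step the paper also leaves implicit: the admissible set of $y$ is the diagonal shadow of the hypercube, whose measure is $2(2\delta)^{n-1}$ when $n=2$ (and strictly larger for $n\ge 3$), so the advertised constants come out cleanly only in the bifiltration case; the same caveat applies to your Part 1(b) claim $\phi_\delta(M)(x)\le w(M)/\delta$, where bounding the integrand by $w(M)$ and multiplying by the shadow measure does not give constant $1$ --- so do not expect that computation to be ``elementary bookkeeping.''

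Where you genuinely diverge is Part 2(c). The paper stays within the volume formula: for $b\le d$ in $\supp{M}\cap R_{x,\bdelta}$ it shifts the corners by $\gamma=\di\left(\restr{M}{R_{x,\bdelta}},\restr{M'}{R_{x,\bdelta}}\right)$, argues $b+\boldsymbol\gamma$, $d-\boldsymbol\gamma\in\supp{M'}$, and compares the two integrals at a cost of $2\gamma/\delta$ before taking suprema. Your cube-shrinking argument (an $\varepsilon$-interleaving forces $R_{x',\delta'-\varepsilon}\subseteq\supp{M'}$, hence the largest inscribed half-sides satisfy $|\delta_\ast-\delta'_\ast|\le\varepsilon$) is correct and arguably cleaner, but the final step as you state it does not yield the claimed constant: on $[0,1]$ the map $t\mapsto t^n$ is $n$-Lipschitz, not $2$-Lipschitz, so you only get $|\phi_\delta(M)(x)-\phi_\delta(M')(x)|\le n\,(\di(M,M')\wedge\delta)/\delta$, which matches the statement for $n=2$ but is weaker for $n\ge 3$. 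Either accept the dimension-dependent factor (the paper's own constants are only tight for $n=2$ anyway, for the shadow-measure reason above) or reproduce the paper's corner-shifting computation if you want the bound verbatim.
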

\begin{proof}
	Claim 1. is a simple consequence of Equation~(\ref{equation:computation}).
	
	Claim 2. for \stabreprname{} parameter (a) is a simple consequence of the triangle inequality.
	
	Let us prove Claim 2. for (b). Let $x\in\R^n$ and $\delta >0$. One has:
	\begin{align*}
	    |\phi_\delta(M)(x)-\phi_\delta(M')(x)|
	    & \leq \frac{2}{(2\delta)^n} \int_{ \left\{ y : y_n = 0\right\}}
		|\di\left(\restr M {l_y \cap R_{x,\bdelta}},0\right) - \di\left(\restr {M'} {l_y \cap R_{x,\bdelta}},0\right)|
		\dd \lambda^{n-1}(y) \\
		& \leq \frac{2}{(2\delta)^n}  \int_{ \left\{ y : y_n = 0\right\}}
		\di\left(\restr M {l_y \cap R_{x,\bdelta}},\restr {M'} {l_y \cap R_{x,\bdelta}}\right) \dd \lambda^{n-1}(y) \\
		& \leq 2(\di(M,M')\wedge\delta)/\delta,
	\end{align*}
	where the first inequality comes from Lemma~\ref{lemma:volume_formula}, the second inequality is an application of the triangle inequality, and the third inequality comes from Lemma~\ref{lem:landi1}.
	
	Finally, let us prove Claim 2. for (c).
	Let $x\in\R^n$ and $\delta >0$. Let $b\le d\in \supp M \cap R_{x,\bdelta}$. 
	Let also $\gamma>0$. Then, using Lemma~\ref{lemma:volume_formula}, one has:
	\begin{align*}
		\label{eq:proof_local_weight}
		\frac{1}{(2\delta)^n}{\rm vol}(\supp M \cap R_{b,d}) 
		& =
		\frac{2}{(2\delta)^n} 
		\int_{ \left\{ y\in\R^n:y_n=0\right\}} 
		\di(\restr M {R_{b,d}\cap l_y},0)
		\dd \lambda^{n-1}(y) \\
		& \le
		\frac 2 \delta \gamma 
		+
		\frac{2}{(2\delta)^n} 
		\int_{ \left\{ y\in\R^n:y_n=0\right\}} 
		\di(\restr M {R_{b+\boldsymbol \gamma,d - \boldsymbol \gamma}\cap l_y},0)
		\dd \lambda^{n-1}(y)
		,
	\end{align*}
	using the convention $R_{a,b} = \varnothing$ if $a\not \le b$. Now, set $\gamma := \di(\restr {M}{R_{x,\bdelta}}, \restr {M'}{R_{x,\bdelta}})$. If $b+\boldsymbol \gamma$ or $d-\boldsymbol\gamma \not\in \supp{M'}$ then $\di(\restr {M}{R_{x,\bdelta}}, \restr {M'}{R_{x,\bdelta}})=\gamma > \di(M,M')$ which is impossible. 
	Thus, 
	\begin{align*}
		\frac{1}{(2\delta)^n}{\rm vol}(R_{b,d}) 
		& \le
		{2\di(\restr {M}{R_{x,\bdelta}}, \restr {M'}{R_{x,\bdelta}})} / {\delta} \\
		& \ \ \ \ \ \ \ + \sup_{a,c\in R_{x,\bdelta}\cap \supp{M'}} \frac{2}{(2\delta)^n} 
		\int_{ \left\{ y\in\R^n:y_n=0\right\}} \di(\restr M {R_{a,c}\cap l_y},0)
		\dd \lambda^{n-1}(y) \\
		& = 
		{2\di(\restr {M}{R_{x,\bdelta}}, \restr{M'}{R_{x,\bdelta}})} / {\delta} + \phi_\delta(M')(x)
	\end{align*}

	Finally, taking the supremum on $b\le d \in \supp{M}\cap R_{x,\bdelta}$ yields 
	\begin{equation*}
		\phi_\delta(M)(x) - \phi_\delta(M')(x) \le 2 \di(\restr {M}{R_{x,\bdelta}},\restr {M'}{R_{x,\bdelta}}) / \delta \le 2\left(\di(M,M')\wedge \delta  \right) / \delta.
	\end{equation*}
	The desired inequality follows by symmetry on $M$ and $M'$.
	
\end{proof}





Equipped with these results, we can finally prove Theorem~\ref{thm:stable}.

\begin{proof} {\em Theorem~\ref{thm:stable}}.

    Let $\M=\oplus_{i=1}^{m}M_i$ and $\M'=\oplus_{j=1}^{m'}M'_j$ be two modules that are decomposable into interval modules and $x\in \mathbb R^n$.

    {\bf Inequality~\ref{ineq:p_stable}.}  
    To simplify notations, we define the following:
	$w_i := w(M_i)$, $\phi_{i,x} := \phi_\delta(M_i)(x)$ and $w_j' := w(M'_j)$, $\phi_{j,x}' := \phi_\delta(x,M'_j)$. 
	Let us also assume without loss of generality that the indices are consistent with a matching achieving the bottleneck distance.
  	In other words, the bottleneck distance is achieved for a matching that matches $M_i$ with $M'_i$ for every $i$ 
 	(up to adding $0$ modules in the decompositions of $\M$ and $\M'$ so that $m=m'$).
    Finally, assume without loss of generality that $\sum_i w_i' \ge \sum_i w_i$. Then, one has:
	\begin{align*}
		|V_{1,\delta}(\M)(x)-V_{1,\delta}(\M')(x)|& = \left|\frac{1}{\sum_i w_i}\sum_i w_i\phi_{i,x} - \frac{1}{\sum_i w_i'}\sum_i w_i'\phi_{i,x}'\right| 
		\\
		&\le 
		\frac{1}{\sum_i w_i'}
		\left|
		\sum_i w_i \phi_{i,x}
		-
		\sum_i w_i' \phi_{i,x}'
		\right|
		+ 
		\left|
		\frac{1}{\sum_i w_i} - \frac{1}{\sum_i w_i'}
		\right|
		\left|
		\sum_i w_i \phi_{i,x}
		\right|.
	\end{align*}
Now, for any index $i$, since  
$\di(M_i, M'_i) \le \db(\M,\M')$ and $|w_i-w_i'|\leq \di(M_i,M'_i)\le \db(\M,\M')$ by Lemma~\ref{lemma:basic_image_stability}, Proposition~\ref{prop:local_weights_ppties} ensures that:
\begin{equation*}
	|w_i \phi_{i,x} - w_i' \phi_{i,x}'|
	\le 
	|w_i  - w_i'| \phi_{i,x} + w_i'|\phi_{i,x} - \phi_{i,x}'| 
	\le
	2(w_i + w_i')(\db(\M,\M') \wedge \delta)/\delta  
\end{equation*}
and
\begin{equation*}
	\left|\frac{1}{\sum_i w_i} - \frac{1}{\sum_i w_i'} \right|
	\le
	\frac{1}{\sum_i w_i'}
	\left|
	\frac{\sum_i w_i' - w_i}{\sum_i w_i}
	\right|
	\le
	\frac{ m \db(\M,\M')}{(\sum_i w'_i)(\sum_i w_i)}.
\end{equation*}
Finally,
\begin{align*}
	|V_{1,\delta}(\M)(x)-V_{1,\delta}(\M')(x)| & \le \left[ 
	\frac{\sum_i w_i + w_i'}{\sum_i w_i'}
	+
	\frac{\sum_i w_i }{\frac 1 m (\sum_i w_i)(\sum_i w_i')}
	\right] 
	2(\db(\M,\M') \wedge \delta)/\delta \\
	& \le
	\left[ 
	4+ \frac{2}{C}
	\right]
	(\db(\M,\M') \wedge \delta)/\delta .
\end{align*}

{\bf Inequality~\ref{ineq:0_stable}} can be proved using the proof of Inequality~\ref{ineq:p_stable} by replacing every $w_i$ by 1.

{\bf Inequality~\ref{ineq:inf_stable}.} 
Let us prove the inequality for (b).
Let $R := R_{x-\bdelta, x+\bdelta}$. One has: 
\begin{align*}
	V_{\infty,\delta}(\M)(x) - V_{\infty,\delta} & (\M')(x)
	= \\
 	& \sup_i\frac{2}{(2\delta)^n}\int_{ \left\{ y\in\R^n:y_n=0\right\}}
 	\di(\restr {M_i} {l_y\cap R}, 0)\dd \lambda^{n-1}(y) \\
 	-
 	& \sup_j\frac{2}{(2\delta)^n}\int_{ \left\{ y\in\R^n:y_n=0\right\}}
 	d_I(\restr {M'_j} {l_y\cap R}, 0)\dd \lambda^{n-1}(y)
 	\\
 	\emph{\footnotesize(for any index $j$)}\le &
 	\sup_i\frac{2}{(2\delta)^n}\int_{ \left\{ y\in\R^n:y_n=0\right\}}
 	\di(\restr {M_i} {l_y\cap R}, 0) - \di(\restr {M'_j} {l_y\cap R}, 0) \dd \lambda^{n-1}(y)
 	\\
 	\le &
 	\frac{2}{(2\delta)^n}\int_{ \left\{ y\in\R^n:y_n=0\right\}}
 	\sup_i \inf_j
 	\di(\restr {M_i} {l_y\cap R},\restr {M'_j} {l_y\cap R}) \dd  \lambda^{n-1}(y)
\end{align*}
Now, as the interleaving distance is equal to the bottleneck distance for single parameter persistence \citep[Theorem 5.14]{chazalStructureStabilityPersistence2016}, one has:
\begin{equation*}
	\sup_i \inf_j
 	\di(\restr {M_i} {l_y\cap R},\restr {M'_j} {l_y\cap R})
 	\le \db(\restr {\M} {l_y\cap R}, \restr {\M'} {l_y \cap R}) =
 	\di(\restr {\M} {l_y\cap R}, \restr {\M'} {l_y \cap R})
 	\le 
 	\di(\M,\M')\wedge \delta
\end{equation*}
which leads to the desired inequality. 
The proofs for (a) and (c) follow the same lines (upper bound the suprema in the right hand term with either infima or appropriate choices in order to reduce to the single parameter case).
\end{proof}


\section{An additional stability theorem}\label{app:add}

In this section, we define a new \stabreprname{}, with a slightly different type of upper bound. It relies on the fibered barcode introduced in Appendix~\ref{app:back}. We also slightly abuse notations and use $M_i$ to denote both an interval module and its support.

\begin{proposition}
Let $\M\simeq\oplus_{i=1}^m M_i$ be a multiparameter persistence module that can be decomposed into interval modules.
Let $\sigma > 0$, and let $0\leq \delta \leq \delta(\M)$, where $$\delta(\M):= {\rm inf}\{\delta \geq 0:\Gamma_{\M}{\rm\ achieves\ }\db(B_{\e_\Delta,x}, B_{\e_\Delta,x+\delta {\uv}}) {\rm\ for\ all\ }x,\uv{\rm\ s.t.\ }\norm{{\uv}}_\infty=1,\langle\e_\Delta,{\bf u}\rangle=0\},$$
where $\Gamma_{\M}$ is the partial matching induced by the decomposition of $\M$.  
Let $\mathcal N(x,\sigma)$ denote the function $\mathcal N(x,\sigma):\left\{\begin{array}{ll}
        \R^n  & \rightarrow \R  \\
         p    & \mapsto {\rm exp}\left(-\frac{ \norm{p-x}^2}{2\sigma^2}\right)
     \end{array}\right.$ and let
\begin{equation}
\repr_{\delta,\sigma}(\M):\left\{\begin{array}{ll}
        \R^n  & \rightarrow \R  \\
         x    & \mapsto\max_{1\leq i\leq m}\ \ \max_{f\in \mathcal C(x,\delta,M_i)}\ \ \norm{\mathcal N(x,\sigma)\cdot f}_1
         \end{array}\right.
\end{equation}
where $\mathcal C(x,\delta,M_i)$ stands for the set of interval functions from $\R^n$ to $\{0,1\}$ whose support is $T_\delta(\ell)\cap M_i$, where $\ell$ is a connecting component of ${\rm im}(\ell_{{\bf e}_\Delta,x}) \cap M_i$ and ${\bf e}_\Delta=[1,\dots,1]\in\R^n$, 
and where $T_\delta(\ell)$ is the 
$\delta$-thickening of the line $L(\ell)$ induced by $\ell$: $T_\delta(\ell)=\{x\in\R^k\,:\,\norm{x,L(\ell)}_\infty\leq\delta\}$.

Then, $\repr_{\delta,\sigma}$ satisfies the following stability property:
\begin{equation}
    \norm{\repr_{\delta,\sigma}(\M) - \repr_{\delta,\sigma}(\M')}_\infty \leq 
    (\sqrt{\pi}\sigma)^n \cdot \sqrt{2^{n+1}\delta^{n-1} \di(\M,\M') + C_n(\delta)},
\end{equation}
where $C_n(\cdot)$ is a continuous function such that  $C_n(\delta)\rightarrow 0 $ when $\delta\rightarrow 0$.
\end{proposition}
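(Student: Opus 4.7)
The plan is to bound $|\repr_{\delta,\sigma}(\M)(x) - \repr_{\delta,\sigma}(\M')(x)|$ pointwise in $x\in\R^n$ and then take the supremum. Fix $x$ and, without loss of generality, assume the pointwise difference is nonnegative. Let $(i^\star,f^\star)$ realize the maximum in the definition of $\repr_{\delta,\sigma}(\M)(x)$, with $f^\star$ being the characteristic function of $T_\delta(\ell^\star)\cap M_{i^\star}$, where $\ell^\star$ is some connected component of $\mathrm{im}(\ell_{\e_\Delta,x})\cap M_{i^\star}$. The core idea is to produce a comparable competitor $f'$ from the decomposition of $\M'$ along the same diagonal line, and then control the difference of Gaussian-weighted $L^1$ norms by the volume of the symmetric difference of the supports of $f^\star$ and $f'$.

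First, I would invoke Lemma~\ref{lem:landi1} applied to the fibered barcode along the diagonal line $\ell_{\e_\Delta,x}$, which yields $\db(B_{\e_\Delta,x},B'_{\e_\Delta,x})\leq \di(\M,\M')$ since the diagonal direction has $\hat{e}=1$. The hypothesis $\delta\leq\delta(\M)$ combined with the definition of $\delta(\M)$ guarantees that the matching $\Gamma_\M$ induced by the interval decomposition of $\M$ realises this bottleneck distance on every perturbed diagonal line passing within $\delta$ of $x$. Hence $\ell^\star$ is matched either to a bar $\ell'$ sitting inside some summand $M'_{j^\star}$ of $\M'$ whose endpoints are within $\epsilon:=\di(\M,\M')$ of those of $\ell^\star$, or to the empty bar, in which case $\ell^\star$ has length at most $2\epsilon$ and we take $f'=0$. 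In either case, both $f^\star$ and $f'$ belong to the admissible family $\mathcal C(x,\delta,\cdot)$ over $\M$ and $\M'$ respectively, so
\begin{equation*}
    \repr_{\delta,\sigma}(\M)(x)-\repr_{\delta,\sigma}(\M')(x)\leq\norm{\mathcal N(x,\sigma)\cdot f^\star}_1-\norm{\mathcal N(x,\sigma)\cdot f'}_1\leq \int_A \mathcal N(x,\sigma)(p)\,dp,
\end{equation*}
where $A:=\supp{f^\star}\triangle\supp{f'}$.

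Next, I would bound the volume of $A$. Since $\supp{f^\star}$ and $\supp{f'}$ are $\delta$-thickenings of two diagonal segments whose endpoints lie within $\epsilon$, the symmetric difference decomposes (away from the boundaries of $M_{i^\star}$ and $M'_{j^\star}$) into two "caps" of the form $T_\delta(\cdot)\cap B_{\ell^\infty}(\text{endpoint},\epsilon)$, each of $n$-volume at most $(2\delta)^{n-1}\cdot\epsilon$. Summing and bounding generously yields $\mathrm{vol}(A)\leq 2^{n+1}\delta^{n-1}\epsilon + C_n(\delta)$, where $C_n(\delta)$ absorbs the lower-order correction arising when the thickenings are clipped by $\partial M_{i^\star}$ or $\partial M'_{j^\star}$, which vanishes as $\delta\to 0$ by the regularity guaranteed by the hypothesis $\delta\leq\delta(\M)$. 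To finish, I apply Cauchy-Schwarz to $\int_A\mathcal N(x,\sigma)= \int \mathcal N(x,\sigma)\cdot\mathds{1}_A$ using the computation $\int_{\R^n}\mathcal N(x,\sigma)^2\,dp = (\sigma\sqrt\pi)^n$, from which the stated bound follows after symmetrizing the roles of $\M$ and $\M'$ (the reverse inequality is obtained by starting from the maximizer of $\repr_{\delta,\sigma}(\M')(x)$).

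The main obstacle is Step 2, namely the rigorous control of the residual term $C_n(\delta)$. The two matched bars $\ell^\star$ and $\ell'$ need not have their $\delta$-thickenings clipped in the same way by the boundaries of the underlying intervals $M_{i^\star}$ and $M'_{j^\star}$, and these clipping discrepancies scale a priori only with $\delta^n$ rather than with $\delta^{n-1}\epsilon$. Showing that this residual is uniformly $o(1)$ as $\delta\to 0$ requires exploiting the fact that, under the hypothesis $\delta\leq\delta(\M)$, the matching $\Gamma_\M$ tracks the geometry of the supports coherently on the entire diagonal family, so that interval boundaries only contribute an $(n-1)$-dimensional correction. A secondary subtlety is the symmetrization: one must ensure the maximizer of $\repr_{\delta,\sigma}(\M')$ admits a companion in $\M$; this follows from the same matching argument applied in the opposite direction, using that $\Gamma_\M$ is a partial bijection.
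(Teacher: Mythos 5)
Your overall strategy is the same as the paper's (reduce, via the reverse triangle inequality and H\"older/Cauchy--Schwarz, to the volume of the symmetric difference of the supports of two Gaussian-weighted interval functions attached to matched bars on the central diagonal line), and your reduction is in fact a bit leaner: comparing the maximizer of $\repr_{\delta,\sigma}(\M)(x)$ directly to a single competitor built from the matched bar in $\M'$, then symmetrizing, legitimately avoids the paper's sorted-vector bookkeeping and the sorting lemma it cites. However, there is a genuine gap exactly where you flag an ``obstacle'': the bound $\mathrm{vol}\bigl(\supp{f^\star}\triangle\supp{f'}\bigr)\le 2^{n+1}\delta^{n-1}\di(\M,\M')+C_n(\delta)$ is never established. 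Your ``two caps'' estimate treats $\supp{f^\star}$ and $\supp{f'}$ as unclipped slabs $T_\delta(\ell^\star)$ and $T_\delta(\ell')$, but the actual supports are $T_\delta(\ell^\star)\cap M_{i^\star}$ and $T_\delta(\ell')\cap M'_{j^\star}$, and the bottleneck matching $\Gamma$ on the central line $\ell_{\e_\Delta,x}$ says nothing by itself about how the two interval supports intersect the nearby parallel lines inside the slab; this is precisely the part you acknowledge but do not resolve, and without it the claimed volume bound (including the fact that the residual is $O(\delta^n)$, uniformly in the modules) is unproved.

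The paper closes this gap by slicing the slab into the parallel diagonal lines $\ell^{\uv}$, $\norm{\uv}_\infty\le\delta$, $\langle\uv,\e_\Delta\rangle=0$, and bounding, on each line, the one-dimensional symmetric difference by a three-term triangle inequality: (same module, line $x$ vs.\ line $x+\uv$) $+$ (the two modules on the central line) $+$ (same module $\M'$, line $x$ vs.\ line $x+\uv$). Each term is controlled by four times a bottleneck distance between fibered barcodes, which is where the hypothesis $\delta\le\delta(\M)$ (and its analogue for $\M'$, which the paper's proof also uses) enters: it guarantees that the matchings achieving $\db(B_{\e_\Delta,x},B_{\e_\Delta,x+\uv})$ and $\db(B'_{\e_\Delta,x},B'_{\e_\Delta,x+\uv})$ are the ones induced by the decompositions, so the bar of $M_{i^\star}$ on the offset line is genuinely the one compared with its bar on the central line. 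Lemma~\ref{lem:landi2} then bounds the cross-line terms by $\norm{\uv}_\infty$ and Lemma~\ref{lem:landi1} bounds the cross-module term by $\di(\M,\M')$; integrating over the $(n-1)$-dimensional set of offsets yields exactly $2^{n+1}\delta^{n-1}\di(\M,\M')+C_n(\delta)$ with $C_n(\delta)=8\int_{\uv}\norm{\uv}_\infty\,\mathrm{d}\uv\to 0$. Note also that the residual does not need to scale with $\di(\M,\M')$ as you worried; it only needs to vanish with $\delta$, which the slicing argument delivers. To complete your proof you would need to supply this per-line argument (or an equivalent control of the clipped supports off the central line); as written, the central quantitative step is asserted rather than proved. (A minor point: Cauchy--Schwarz gives the constant $\norm{\mathcal N(x,\sigma)}_2=(\sigma\sqrt{\pi})^{n/2}$, which is consistent with, and slightly sharper than, the stated $(\sqrt{\pi}\sigma)^n$.)
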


\begin{proof}
Let 
$\M=\oplus_{i=1}^m M_i$ and $\M'=\oplus_{j=1}^{m'} M'_j$ be two persistence modules that are decomposable into intervals, let $x\in\R^k$ and let $0\leq \delta \leq \min\{\delta(\M),\delta(\M')\}$.  

\paragraph*{Notations.} We first introduce some notations. Let $N$ (resp. $N'$) be the number of bars in $B_{\e_\Delta,x}$ (resp. $B'_{\e_\Delta,x}$), and assume without loss of generality that $N\leq N'$. Let $\Gamma$ be the partial matching achieving $\db(B_{\e_\Delta,x}, B'_{\e_\Delta,x})$. Let $N_1$ (resp. $N_2$) be the number of bars in $B_{\e_\Delta,x}$ that are matched (resp. not matched) to a bar in $B'_{\e_\Delta,x}$ under $\Gamma$, so that $N=N_1+N_2$. Finally, note that $B_{\e_\Delta,x}=\{\ell\,:\,\exists i {\rm\ such\ that\ } \ell\in{\mathcal C}({\rm im}(\ell_{\e_\Delta,x})\cap M_i){\rm\ and\ }{\rm im}(\ell_{\e_\Delta,x})\cap M_i \neq \emptyset\}$, where $\mathcal C$ stands for the set of connected components (and similarly for $B'_{\e_\Delta,x}$), and let $F_\Gamma : B_{\e_\Delta,x}\rightarrow B'_{\e_\Delta,x}$ be a function defined on all bars of $B_{\e_\Delta,x}$ that coincides with $\Gamma$ on the $N_1$ bars of $B_{\e_\Delta,x}$ that have an associated bar in $B'_{\e_\Delta,x}$, and that maps the $N_2$ remaining bars of $ B_{\e_\Delta,x}$ to some arbitrary bars in the $(N'-N_1)$ remaining bars of $ B'_{\e_\Delta,x}$.

\paragraph*{A reformulation of the problem with vectors.} We now derive vectors that allow to reformulate the problem in a useful way. 
Let $\hat V$ be the sorted vector of dimension $N$ containing all weights 
$\norm{\mathcal N(x,\sigma)\cdot f}_1$, where $f$ is the interval function whose support is $T_\delta(\ell)\cap M_i$ for some $M_i$, where $\ell\in B_{\e_\Delta,x}$ is a connected component of ${\rm im}(\ell_{\e_\Delta,x})\cap M_i$.
Now, let $\hat V'$ be the vector of dimension $N'$ obtained by concatenating the two following vectors:
\begin{itemize}
    \item the vector $\hat V'_1$ of dimension $N$ whose $i$th coordinate is  $\norm{\mathcal N(x,\sigma)\cdot f'}_1$, 
    where $f'$ is the interval function whose support is $T_\delta(\ell')\cap M'_j$ for some $M'_j$, and $\ell'\in B'_{\e_\Delta,x}$ is the image under $\Gamma$ of the bar $\ell\in B_{\e_\Delta,x}$ corresponding to the $i$th coordinate of $\hat V$, i.e., 
    $\ell'=F_\Gamma(\ell)$ where $[\hat V]_i=\norm{\mathcal N(x,\sigma)\cdot f}_1$ and $f$ is the interval function whose support is $T_\delta(\ell)\cap M_{i_0}$ for some $M_{i_0}$.
    In other words, $\hat V'_1$ is the (not necessarily sorted) vector of weights computed on the bars of $B'_{\e_\Delta,x}$ that are images (under the partial matching $\Gamma$ achieving the bottleneck distance) of the bars of $B_{\e_\Delta,x}$ that were used to generate the (sorted) vector $\hat V$.
    
    \item the vector $\hat V'_2$ of dimension $(N'-N)$ whose $j$th coordinate is  $\norm{\mathcal N(x,\sigma)\cdot f'}_1$, 
    where $f'$ is an interval function whose support is $T_\delta(\ell')\cap M'_j$ for some $M'_j$, and $\ell'\in B'_{\e_\Delta,x}$ 
    satisfies $\ell'\not\in {\rm im}(F_\Gamma)$. In other words, $\hat V'_2$ is the vector of weights computed on the bars of $B'_{\e_\Delta,x}$ (in an arbitrary order) that are not images of bars of $B_{\e_\Delta,x}$ under $\Gamma$.
    
\end{itemize}

Finally, we let $V$ be the vector of dimension $N'$ obtained by filling $\hat V$ (whose dimension is $N\leq N'$) with null values until its dimension becomes $N'$, and we let $V'={\rm sort}(\hat V')$ be the vector obtained after sorting the coordinates of $\hat V'$. Observe that:
\begin{equation}\label{eq:reform}
    |\repr_{\delta,\sigma}(\M)(x)- \repr_{\delta,\sigma}(\M')(x)|=[V-V']_1=
[V-{\rm sort}(\hat V')]_1\end{equation}


\paragraph*{An upper bound.} We now upper bound $\norm{V-\hat V'}_\infty$. Let $q\in \llbracket 1,N'\rrbracket$. 
Then, one has $[V]_q= \norm{ \mathcal N(x,\sigma)\cdot f}_1$, where $f$ is an interval function whose support is $T_\delta(\ell)\cap M_i$ for some $M_i$ with $\ell\in B_{\e_\Delta,x}$ if $q\leq N$ and $\ell=\emptyset$ otherwise; and similarly $[\hat V']_q=\norm{ \mathcal N(x,\sigma)\cdot f' }_1 $, where $f'$ is an interval function whose support is $T_\delta(\ell')\cap M'_j$ for some $M'_j$ with $\ell'\in B'_{\e_\Delta,x}$. Thus, one has:
\begin{align*}
    [V-\hat V']_q & = |\norm{ \mathcal N(x,\sigma)\cdot f}_1 - \norm{ \mathcal N(x,\sigma)\cdot f'}_1| \\
    & \leq \norm{ \mathcal N(x,\sigma)\cdot f -  \mathcal N(x,\sigma)\cdot f'}_1 \text{ by the reverse triangle inequality} \\
    & = \norm{ \mathcal N(x,\sigma)\cdot (f - f')}_1 \text{ by linearity} \\
    & \leq \norm{ \mathcal N(x,\sigma)}_2 \cdot \norm{f - f'}_2 \text{ by H\"older's inequality} \\
    & = (\sqrt{\pi}\sigma)^k \cdot \norm{f - f'}_2 
\end{align*}

Since $(f-f')$ is an interval function whose support is $(T_\delta(\ell)\cap M_i) \triangle (T_\delta(\ell')\cap M'_j)$,
one has  $\norm{f - f'}_2=\sqrt{|(T_\delta(\ell)\cap M_i) \triangle (T_\delta(\ell')\cap M'_j)|}$. 
Given a segment $\ell$ and a vector $\uv$, we let $\ell_{\uv}$ denote the segment $\uv+\ell$, and we let $\ell^{\uv}$ denote the (infinite) line induced by $\ell_{\uv}$. 
More precisely:
\begin{align}
    \norm{f - f'}^2_2 & = |(T_\delta(\ell)\cap M_i) \triangle (T_\delta(\ell')\cap M'_j)| \nonumber \\
    & = | \bigcup_{\uv} (\ell^{\uv}\cap M_i )\triangle \bigcup_{\uv} ((\ell')^{\uv}\cap M'_j) | \nonumber\\
    &\ \ \ \ \ \ \ \ \ \ \ \ \ \ \text{ where }\uv\text{ ranges over the vectors such that }\norm{\uv}_\infty\leq \delta,\langle \uv,\e_\Delta\rangle=0\nonumber\\
    & \leq \int_{\uv} | (\ell^{\uv}\cap M_i) \triangle ( (\ell')^{\uv}\cap M'_j) |{\rm d}\uv \nonumber\\
    & \leq \int_{\uv} | (\ell^{\uv}\cap M_i) \triangle (\ell\cap M_i)_{\uv}| 
    +   | ( \ell\cap M_i)_{\uv} \triangle (\ell'\cap M'_j)_{\uv} |  
    + |( \ell'\cap M'_j)_{\uv} \triangle ((\ell')^{\uv}\cap M'_j) |{\rm d}\uv \nonumber\\
    & \leq 
    \int_{\uv} 4\db(B_{\e_\Delta,x},B_{\e_\Delta,x+\uv})
    +4\db(B_{\e_\Delta,x},B'_{\e_\Delta,x})
    + 4\db(B'_{\e_\Delta,x},B'_{\e_\Delta,x+\uv}){\rm d}\uv 
     \label{ineq:bottle}\\
    & \leq 
    4\int_{\uv}\norm{\uv}_\infty
    +
    \di(\M,\M')
    + 
    \norm{\uv}_\infty{\rm d}\uv
    \text{ by Lemma~\ref{lem:landi1} and~\ref{lem:landi2}}\nonumber
\end{align}
Inequality~(\ref{ineq:bottle}) comes from the fact that the symmetric difference between two bars (in two different barcodes) that are both matched (or unmatched) by the optimal partial matching is upper bounded by four times the bottleneck distance between the barcodes, and that (by assumption) the partial matchings achieving $\db(B_{\e_\Delta,x},B_{\e_\Delta,x+\uv})$ and $\db(B'_{\e_\Delta,x},B'_{\e_\Delta,x+\uv})$ are induced by $\M$ and $\M'$. 




\paragraph*{Conclusion.} Finally, one has 
\begin{align}
    |\repr_{\delta,\sigma}(\M)(x)- \repr_{\delta,\sigma}(\M')(x)| & =[V-V']_1= [V-{\rm sort}(\hat V')]_1\text{ from Equation~(\ref{eq:reform})} \nonumber\\
    & \leq \norm{V-V'}_\infty\nonumber \\
    & \leq (\sqrt{\pi}\sigma)^k \cdot \sqrt{2^{n+1}\delta^{n-1}\di(\M,\M') + C_n(\delta)}, \label{ineq:sort} 
\end{align}
with $C_n(\delta)=8\int_{\uv}\norm{\uv}_\infty{\rm d}\uv \rightarrow 0 $ when $\delta\rightarrow 0$.
Inequality~(\ref{ineq:sort}) comes from the fact that any upper bound for the norm of the difference between a given vector $\hat V'$ and a sorted vector $V$, is also an upper bound for the norm of the difference between the sorted version $V'$ of $\hat V'$ and the same vector $V$ (see Lemma~3.9 in~\citep{carriereLocalSignaturesUsing2015}).  
\end{proof}

While the stability constant is not upper bounded by $\delta$, $\repr_{\delta, \sigma}$ is more difficult to compute than the \stabreprname{}s presented in Definition~\ref{def:weight}.

\section{Pseudo-code for \stabreprname{}s}\label{app:code}

In this section, we briefly present the pseudo-code that we use to compute \stabreprname{}s.
\rebuttal{Our code is based on implicit descriptions of the candidate decompositions of multiparameter persistence modules (which are the inputs of \stabreprname{}s) through their so-called {\em birth} and {\em death} corners. These corners can be obtained with, e.g., the public softwares \mma~\citep{Loiseaux2022} and \rivet~\citep{lesnickComputingMinimalPresentations2022}.}

In order to compute our \stabreprname{}s, we implement the following procedures:
\begin{enumerate}
    \item Given an interval module $M$ and a rectangle $R\subset \R^n$, compute the restriction $\restr M R$.
    \item Given an interval module $M$ (or $\restr M R$), compute $\di(M,0)$. \rebuttal{This allows 
    to compute our weight function and first interval representation in Definition~\ref{def:weight}.}
    \item Given an interval module restricted to a rectangle, compute the volume of the biggest rectangle in the support of this module. \rebuttal{This allows  
    to compute the third interval representation in Definition~\ref{def:weight}.} 
\end{enumerate}

For the first point, Algorithm~\ref{alg:push} works by "pushing" the corners of the interval on the given rectangle in order to obtain the updated corners.

\begin{algorithm}[H]\label{alg:push}
\caption{Restriction of an interval module to a rectangle}
\KwData{birth and death corners of an interval module $M$, rectangle $R = \{z \in \R^n : m\le z \le M\}$}
\KwResult{$\mathrm{new\_interval\_corners}$, the birth and death corners of $\restr M R$.}
\For{$\mathrm{interval}=\{\mathrm{interval\_birth\_corners}, \mathrm{interval\_death\_corners}\}$ \textbf{ in } $M$}
{
    $\mathrm{new\_birth\_list} \gets [\,]$\;
    \For{$b$ \textbf{in} $\mathrm{interval\_birth\_corners}$}{
        \If{$b \le M$}{
            $b' = \{\max(b_i, m_i) \text{ for } i \in \llbracket 1,n\rrbracket\}$\;
            Append $b'$ to $\mathrm{new\_birth\_list}$;
        }
    }
    $\mathrm{new\_death\_list} \gets [\,]$\;
    \For{$d$ \textbf{in} $\mathrm{interval\_death\_corners}$}{
        \If{$d \ge m$}{
            $d' = \{\min(d_i, M_i) \text{ for } i \in \llbracket 1,n\rrbracket\}$\;
            Append $d'$ to $\mathrm{new\_death\_list}$;
        }
    }
    $\mathrm{new\_interval\_corners} \gets [\mathrm{new\_birth\_list}, \, \mathrm{new\_death\_list}]$\;
}
\end{algorithm}

For the second point, we proved in Lemma~\ref{lemma:basic_image_stability} that our \stabreprname{} weight function is equal to $\di(M,0)$ and has a closed-form formula with corners, that we implement in Algorithm~\ref{alg:wdi}.

\begin{algorithm}[H]\label{alg:wdi}
\caption{\stabreprname{} weight function}
\KwData{birth and death corners of an interval module $M$}
\KwResult{$\mathrm{distance}$, the interleaving distance $d_I(M,0)$.}
$\mathrm{distance} \gets 0$\;
    \For{$b$ \textbf{in} $M\mathrm{\_birth\_corners}$}{
        \For{$d$ \textbf{in} $M\mathrm{\_death\_corners}$}{
            $\mathrm{distance} \gets \max \left(\mathrm{result},  \frac 1 2 \min_i (d_i - b_i)_+ \right)$;
        }
    }
\end{algorithm}

The third point also has a closed-form formula with corners, leading to Algorithm~\ref{alg:volume}.

\begin{algorithm}[H]\label{alg:volume}
\caption{\stabreprname{} interval representation}
\KwData{birth and death corners of an interval module $M$}
\KwResult{$\mathrm{volume}$, the volume of the biggest rectangle fitting in $\supp M$}
$\mathrm{volume} \gets 0$\;
\For{$b$ \textbf{in} $M\mathrm{\_birth\_corners}$}{
    \For{$d$ \textbf{in} $M\mathrm{\_death\_corners}$}{
        $\mathrm{volume} \gets \max \left(\mathrm{result}, \Pi_i (d_i - b_i)_+ \right)$;
    }
}
\end{algorithm}

Finally, we show how to get persistence barcodes corresponding to slicings of interval modules from its corners.

\begin{algorithm}[H]\label{alg:corner_landscape}
\caption{Restriction of an interval module to a line}
\KwData{birth and death corners of an interval module $M$, a diagonal line $l$}
\KwResult{$\mathrm{barcode}$, the persistence barcode associated to $\restr M l$}
$\mathrm{barcode} \gets [\,]$\;
$y \gets \text{ an arbitrary point in $l$}$\;
\For{$\mathrm{interval}=\{\mathrm{interval\_birth\_corners}, \mathrm{interval\_death\_corners}\}$ \textbf{ in } $M$}{
   $\mathrm{birth} \gets y + \boldsymbol{1} \times\min_{b\in \mathrm{interval\_birth\_corners}} \max_i{b_i-y_i}$;\\
   $\mathrm{death} \gets y + \boldsymbol{1} \times\max_{d\in \mathrm{interval\_death\_corners}} \min_i{d_i-y_i}$;\\
   $\mathrm{bar} \gets [\mathrm{birth}, \mathrm{death}]$;\\
   Append $\mathrm{bar}$ to $\mathrm{barcode}$;
}
\end{algorithm}

\section{Full UCR results and acronyms}\label{app:ucr}

In this section, we provide accuracy scores (obtained with the same parameters than Section~\ref{sec:classif}) for several UCR data sets and classifiers ('svml[x]' stands for linear SVM with $C=$x, 'rf[x]' stands for random forests with x trees, and 'lgbm[x]' stands for gradient boosting from the LightGBM library with x maximum tree leaves). A line indicates that the classifier training time exceeded a reasonable amount of time.  


\begin{table}[H] 
\centering 
\resizebox{\columnwidth}{!}{%
\begin{tabular}{|c || c c c | c c c | c c c c c |}
\hline
Dataset & B1 & B2 & B3  & MPK & MPL & MPI & svml25 & svml 50 & rf50 & rf 100& lgbm 100  \\ 
\hline
\texttt{DistalPhalanxOutlineAgeGroup (DPOAG)}   & 62.6 & 62.6 & {\bf \color{red} 77.0} & 67.6 & 70.5 & {\bf \red{71.9}} 
&71.2& 71.9 & 74.1& 71.9 & 73.3
\\ 
\texttt{DistalPhalanxOutlineCorrect (DPOC)}    & 71.7& {\bf \color{red} 72.5} & 71.7&  74.6 & 69.6 & 71.7 
& 73.9& 73.6 & 72.5& 74.3 & 73.5
\\ 
\texttt{DistalPhalanxTW (DPTW)}                & {\bf \color{red} 63.3}& {\bf \color{red} 63.3} & 59.0 &  61.2 & 56.1 & 61.9
& 69.1& 64.0& 61.9&60.4&61.2 
\\ 
\texttt{ProximalPhalanxOutlineAgeGroup (PPOAG)} & 78.5 & 78.5 & {\bf \color{red} 80.5 }& 78.0 & 78.5 & 81.0
& 82.9& 84.9& 82.9&81.5& 84.4
\\ 
\texttt{ProximalPhalanxOutlineCorrect (PPOC)}  & {\bf \color{red} 80.8} &79.0 &78.4 & 78.7 & 78.7 & {\bf \color{red} 81.8} 
& 79.4& 80.4&82.1& - & 84.9
\\
\texttt{ProximalPhalanxTW (PPTW)}              & 70.7&{\bf \color{red} 75.6} &{\bf \color{red} 75.6} & {\bf \color{red} 79.5} & 73.2 & 76.1 
& 77.6& 75.6&78.0& 75.6 & 75.6
\\ 
\texttt{ECG200}                         & {\bf \color{red} 88.0}& {\bf \color{red} 88.0}& 77.0& 77.0 & 74.0 & {\bf \color{red} 83.0} 
& 83.0& 86& 83 & 86& 82.0
\\ 
\texttt{ItalyPowerDemand (IPD)}               & {\bf \color{red} 95.5}& {\bf \color{red} 95.5}& 95.0& {\bf \color{red} 80.7} & 78.6 & 71.9
& 79.1& 74.7& 81.3&81.4& 72.5
\\ 
\texttt{MedicalImages (MI)}                  & 68.4& {\bf \color{red} 74.7}& 73.7& 55.4 & 55.7 & 60.0 
& 61.1& 61.3& 61.0& 62.1& 62.2
\\ 
\texttt{Plane (P)}                          & 96.2& {\bf \color{red} 100.0}& {\bf \color{red} 100.0}&  92.4 & 84.8 & {\bf \color{red} 97.1} 
& 99.0& 99.0& 99.0& 98.1 & 94.3
\\ 
\texttt{SwedishLeaf (SL)}                    & 78.9 & {\bf \color{red} 84.6} &79.2 & 78.2 & 64.6 & {\bf \color{red} 83.8} 
& 85.8& 85.9&84.2&83.8 &83.0
\\ 
\texttt{GunPoint (GP)}                       & {\bf \color{red} 91.3}& {\bf \color{red} 91.3}& 90.7& 88.7 & {\bf \color{red} 94.0} & 90.7 
& 90.7& 99.3&93.3&91.3&88.7
\\ 
\texttt{GunPointAgeSpan (GPAS)}                & 89.9& {\bf \color{red} 96.5}& 91.8& 93.0 & 85.1 & 90.5 
& 91.8& 87.0&84.5&85.1& 89.2
\\ 
\texttt{GunPointMaleVersusFemale (GPMVF)}       & 97.5 & 97.5& {\bf \color{red} 99.7}& {\bf \color{red} 96.8} & 88.3 & 95.9 
& - & 94.5& 81.6&86.7& 95.9
\\ 
\texttt{PowerCons (PC)}                      & {\bf \color{red} 93.3} & 92.2&87.8 & 85.6 & 84.4 & 86.7 
& 90.6& 88&94.4&91.1 & 88.3
\\ 
\texttt{SyntheticControl (SC)}               & 88.0& 98.3& {\bf \color{red} 99.3}& 50.7 & {\bf \color{red} 60.3} & 60.0
& 61.3 & 59.3&60.7&59.7& 60.0
\\ 
\hline
\end{tabular}}

\caption{Classification results for time series.}
\label{tab:TSres}
\end{table}

\end{document}